\documentclass{paper}
\usepackage{amsmath,amsthm,mleftright}
\usepackage{amssymb,mathtools}
\DeclareMathAlphabet\mathcal{OMS}{cmsy}{m}{n}
\DeclareMathAlphabet\mathbfcal{OMS}{cmsy}{b}{n}
\usepackage{tikz-cd}
\usepackage{xspace}
\usepackage{makecell}
\usepackage{graphicx,booktabs}
\usepackage{ragged2e}
\usepackage{booktabs,multirow,tabularx}
\usepackage{paralist}
\usepackage{hyperref}
\hypersetup{%
    pdfmenubar=true,       
    pdffitwindow=false,     
    pdfstartview={FitH},    
    pdftitle={Image Reconstruction for Multispectral CT},
    colorlinks=true,       
    linkcolor=red,          
    citecolor=red,        
    filecolor=magenta,      
    urlcolor=cyan,           
    pdfborder = {0,0,0}
}
\usepackage{acro}
\usepackage{cleveref}
\usepackage{algorithm}
\usepackage[noend]{algpseudocode}
\makeatletter
\def\BState{\State\hskip-\ALG@thistlm}
\makeatother
\Crefname{equation}{}{}
\crefname{equation}{}{}
\usepackage[title]{appendix}
\usepackage{todonotes}
\usepackage{inputenc}
\theoremstyle{plain}
 \newtheorem{theorem}{Theorem}
  
 \newtheorem{lemma}{Lemma}

\theoremstyle{definition}

\theoremstyle{remark}

\usepackage{bm}
\usepackage{adjustbox}
\newcommand{\trans}{^{\mathsf{T}}}
\newcommand{\diff}{\mathrm{d}}

\begin{document}

\title{Efficient and Accurate Image Reconstruction for Geometric-Inconsistent Multispectral CT with Ray-Dependent Energy Spectra}
\author{Ziqiang Zhang\thanks{State Key Laboratory of Mathematical Sciences, Academy of Mathematics and Systems Science, Chinese Academy of Sciences, Beijing 100190, China; University of Chinese Academy of Sciences, Beijing 100190, China.}, 
Chong Chen\thanks{State Key Laboratory of Mathematical Sciences, Academy of Mathematics and Systems Science, Chinese Academy of Sciences, Beijing 100190, China.}
}

\maketitle

\begin{abstract}
In practical multispectral computed tomography (MSCT), the scanning geometric parameters under different X-ray energy spectra are often inconsistent, and the distributions of the energy spectra are even ray-dependent. However, existing algorithms cannot effectively and accurately solve the associated image reconstruction problem. To address this limitation, using the proposed aggregated energy spectra, we approximate the Jacobian matrix of the nonlinear forward operator at certain special points (e.g., the zero point) as a block product of a diagonal matrix composed of projection matrices and a very small-scale matrix, and then based on this matrix with a special structure, propose an efficient and accurate image reconstruction algorithm tailored for geometric-inconsistent MSCT with ray-dependent energy spectra. Under appropriate conditions, we establish the convergence theory for the proposed algorithm. Furthermore, numerical experiments using both noiseless and noisy projection data are conducted to verify the performance of the proposed algorithm, which demonstrate that the efficiency and accuracy of this algorithm are much higher than existing algorithms, offering the flexibility and scalability to accommodate various MSCT imaging configurations.  
\end{abstract}

\begin{keywords}
Multispectral CT, geometric inconsistency, ray-dependent energy spectra, large-scale nonlinear inverse problem, efficient and accurate image reconstruction, convergence theory
\end{keywords}

\section{Introduction}
	
	Multispectral computed tomography (MSCT) is an advanced imaging modality that uses different X-ray energy spectra or different energy bins of the spectrum to scan the object, and reconstructs basis images from projection data obtained by nonlinear imaging processes. The resulting basis images can then be linearly combined to form virtual monochromatic images (VMIs) or obtain the spatial distribution of linear attenuation coefficient (LAC) at any X-ray energy of interest \cite{Alvarez_1976, Hounsfield1973ComputerizedTA, McCollough_2015}. Unlike conventional CT \cite{shepp1978computerized}, MSCT takes photon energy into account, thereby providing significantly better image quality under the same radiation dose \cite{Alvarez_1979}. Owing to its advantages in artifact suppression \cite{Guggenberger2012, McCollough_2015}, material distinguishability \cite{Patino_2016} and quantitative analysis \cite{Fletcher_2024}, MSCT has become an active research topic in the field of imaging \cite{GREFFIER2023167}, and is widely applied in medicine \cite{Sahbaee_2026,jcm10245757,Treb2025TechnicalPO} and industry \cite{XU201899, LEE2018105}.
	
	A common issue in practical MSCT is geometric-inconsistent scanning, which implies that X-rays from different energy spectra do not share identical scanning geometric parameters. In other words, there exists at least one X-ray path that is not intersected by X-rays from certain energy spectra \cite{maass2011exact}. This situation inevitably occurs to varying degrees in some existing practical systems, such as sequential dual-energy CT, dual-source dual-detector systems, and fast kVp switching \cite{McCollough_2015}.
	
	In addition to the geometric inconsistency, the X-ray energy spectrum itself depends inherently on the ray path determined by the scan geometry parameters \cite{Chen_2017}, and consequently, different X-ray beams most likely exhibit distinct spectral distributions. This dependence stems primarily from physical hardware configurations, such as the presence of a bow-tie filter near the X-ray source, as well as acquisition schemes where multiple measurements are performed along a single ray, for instance, using multiple energy bins in a photon-counting detector. Moreover, the effective energy spectrum of a CT system is jointly determined by the intrinsic X-ray tube spectrum and the energy-dependent response of the detector, further reinforcing its ray-dependence \cite{Ren_2021}.

	Traditional algorithms for MSCT image reconstruction fall into three categories: image-domain, projection-domain, and one-step methods. Image-domain methods \cite{image_2009,Niu_2014} first reconstruct images independently from the measured polychromatic projections and then linearly combine them to form basis images. However, such linear combination fails to capture the nonlinear relationship between the basis images and the polychromatic projections, which inevitably introduces beam-hardening artifacts \cite{TIBKAT:1810057744}. Projection-domain methods \cite{Alvarez_1976,MACOVSKI1976325,Bal_2022} first decompose the polychromatic projections into basis sinograms, and then invert them individually to obtain the basis images. However, these methods require that scanning geometric parameters remain consistent across different energy spectra. As a result, they cannot be directly applied to geometric-inconsistent MSCT. Moreover, both image-domain and projection-domain methods can be classified as the category of two-step methods, as they split the reconstruction workflow into two sequential phases. Therefore, such two-step methods suffer from a fundamental drawback: irreversible information loss occurring in the first step cannot be compensated in the second, which ultimately compromises their accuracy in practice \cite{Mory_2018}. 
	
	Given the limited accuracy of these two-step methods, research focus has gradually shifted toward one-step methods which can be seen as iterative methods \cite{Barber_2016, Cai2013, 993128, 6805661, 7979598, 6601625}. Different from two-step methods, the one-step method directly maps polychromatic projections to basis images during each iteration. Zhao et al. \cite{Zhao_2015} extended the algebraic reconstruction technique (ART) for conventional CT image reconstruction to dual-energy CT (DECT) and named it extended ART (E-ART). Before that, Natterer et al. referred to this method as ART for nonlinear problems \cite[section 5.7]{natterer2001}. Essentially, this method is an application of the nonlinear Kaczmarz method (NKM) for solving the corresponding nonlinear system of equations in DECT image reconstruction \cite{Gao_NKM}.  Chen et al. \cite{CHEN2021101821} proposed a nonlinear least squares model with total variation and nonnegativity constraints, and designed a nonconvex primal-dual (NCPD) method. Gao et al. \cite{Gao_2022} further formulated this image reconstruction problem as a class of nonconvex nonsmooth optimization problems, and developed an extended primal-dual algorithmic framework. The aforementioned one-step methods can reconstruct high-quality basis images for the situations where both of geometric inconsistency and ray dependence coexist in MSCT. However, they typically suffer from slow convergence speeds \cite{Mory_2018}.
	
	More recently, Gao et al. \cite{AFIRE2025} introduced an algorithm called AFIRE by leveraging the simplified Newton method. Although the algorithm AFIRE enables efficient and accurate reconstruction of basis images in geometric-inconsistent MSCT under mildly full scan conditions, its main limitation lies in the underlying assumption that the distributions of energy spectra are ray-independent.
	
	Existing methods struggle to solve the image reconstruction problem efficiently and accurately for geometric-inconsistent MSCT with ray-dependent energy spectra. To address this limitation, we propose a novel reconstruction method tailored to this scenario. The core idea of this method is to decouple the weight coefficients induced by the ray dependence that involved in the Jacobian matrix of the nonlinear forward operator by selecting special points (e.g., the zero point) and employing the proposed aggregated energy spectra. Subsequently, an approximation of the Jacobian matrix is yielded at those special points, which takes the form of a block product consisting of a diagonal matrix of projection matrices and a very small-scale matrix. Inspired by the algorithm AFIRE, the iterative scheme is finally designed by using the inverse of this approximate Jacobian matrix, enabling both efficient and accurate image reconstruction.
	
	The rest of this paper is organized as follows. We introduce the considered forward data model for MSCT in \cref{section2}. \Cref{section3} presents the proposed reconstruction algorithm. In \cref{section4}, we provide the convergence analysis of the proposed algorithm. \Cref{section5} reports some numerical results to validate the performance of the proposed algorithm. Finally, we conclude the paper in \cref{section6}.

\section{The considered forward data model}\label{section2}
	This section briefly introduces the forward data model in geometric-inconsistent MSCT with ray-dependent energy spectra. In MSCT, the image reconstruction problem aims to determine the function of LAC $\mu : \Omega_1 \times \Omega_2 \rightarrow \mathbb{R}$ from transmission measurements acquired with multiple energy spectra, where $\Omega_1\subset \mathbb{R}^{n}$ ($n=2$ or $3$) and $\Omega_2 \subset \mathbb{R}$ denote spatial domain and energy domain, respectively. The LAC function $\mu$ can be typically decomposed into a linear combination of multiple basis images \cite{Alvarez_1976, Hsieh2022CT} as follows
	\begin{equation}\label{mdcc1}
		\mu (\bm{y}, E)=\sum_{d=1}^{D} b_{d}(E) f_{d}(\bm{y}),
	\end{equation}
	where $b_{d}$ is the function of mass attenuation coefficient (MAC) of the $d$-th basis material (e.g., water and bone), $f_d$ represents the material-equivalent density function of the $d$-th basis material, and $D$ denotes the number of basis materials. Since the MAC is generally known a priori, determining $\mu$ is equivalent to reconstructing the $D$ basis images.
	
For given spectrum $q$, data is measured along the X-ray specified by $L(\theta^{[q]})$, where $\theta^{[q]}$ is the parameter determining the ray position. According to Beer--Lambert law \cite{Alvarez_1976, natterer2001}, the measured data can be formulated as, for $q = 1, \ldots, Q$, 
	\begin{equation}\label{dmcc1}
		g_{L(\theta^{[q]})}^{[q]} = \ln\int_{0}^{E_{\text{max}}}s_{L(\theta^{[q]})}^{[q]}(E)\exp\left(-\int_{L(\theta^{[q]})} \mu(\bm{y}, E) \mathrm{d} l\right)\mathrm{d} E,
	\end{equation}
	where $Q$ is the number of energy spectra, $E_{\max}$ denotes the maximum energy in the scan, and $s_{L(\theta^{[q]})}^{[q]}$ is the function of energy spectrum for ray $L(\theta^{[q]})$ under spectrum $q$ such that  
	\[
		s_{L(\theta^{[q]})}^{[q]} \geq 0 \quad  \text{and}\quad \int_{0}^{E_{\text{max}}}s_{L(\theta^{[q]})}^{[q]}(E) \mathrm{d} E = 1.
	\]
Substituting \eqref{mdcc1} into \eqref{dmcc1} yields the continuous-to-continuous (CC)-data model
	\begin{equation}\label{dmcc2}
	g_{L(\theta^{[q]})}^{[q]} = \ln\int_{0}^{E_{\text{max}}}s_{L(\theta^{[q]})}^{[q]}(E)\exp\left(-\sum_{d=1}^D b_d(E)\int_{L(\theta^{[q]})}f_d(\bm{y}) \mathrm{d} l\right) \mathrm{d} E. 
	\end{equation}

To align with practical data acquisition and image reconstruction which are inherently discrete in energy, rays, and spatial domain, we obtain a discrete-to-discrete (DD)-data model based on the CC-data model \eqref{dmcc2}. Assume that the energy interval is uniformly divided into $M$ subintervals of size $\Delta_{E}$, and the data are detected for discrete directions $\theta_j^{[q]}$ with $j=1,\dots, J^{[q]}$ under the $q$-th spectrum. Each basis image function $f_d$ is discretized by an $I$-dimensional vector $\mathbf{f}_d=\left[f_{d1},\dots,f_{dI}\right]\trans$, where $f_{di}$ denotes the gray value of the $d$-th discrete basis image at pixel/voxel $i$. For convenience, we assemble all the basis images into a combined vector $\mathbf{f}=\left[\mathbf{f}_1\trans,\dots,\mathbf{f}_D\trans\right]\trans$. Then the forward operator of ray $j$ under spectrum $q$ can be modeled as
	\begin{equation}\label{dm2}
		K_j^{[q]}(\mathbf{f})=\ln \sum_{m=1}^{M} s_{jm}^{[q]} \exp\left(-\sum_{d=1}^{D} b_{dm}\sum_{i=1}^{I} p_{ji}^{[q]}f_{di}\right),
	\end{equation}
	where $s_{jm}^{[q]}=s_{j}^{[q]}(m \Delta_{E}) \Delta_{E}$ satisfying
	\begin{equation}\label{eq:s_condition}
		s_{jm}^{[q]} \geq 0  \quad  \text{and}\quad \sum_{m=1}^{M} s_{jm}^{[q]}=1,
	\end{equation}
	$b_{dm}=b_d(m \Delta_{E})$, and $p_{ji}^{[q]}$ is often taken as the intersection length of ray $j$ with pixel/voxel $i$ \cite{Chen03072022}. 
	
Let $\mathbf{s}_{j}^{[q]}=[s_{j1}^{[q]},\dots,s_{jM}^{[q]}]\trans$ denote the ray-specific energy spectrum vector for ray $j$ under spectrum $q$. If $\mathbf{s}_{1}^{[q]}=\dots=\mathbf{s}_{J^{[q]}}^{[q]}$, the spectrum is termed ray-independent; otherwise, it is referred to as ray-dependent. In realistic scenarios, spectra are often ray-dependent due to factors such as bow-tie filters and geometric parameter variations. The estimation and correction of such ray-dependent spectra can be referred to \cite{10462530,7906625}. Moreover, define $\mathbf{b}_d=[b_{d1},\dots,b_{dM}]\trans$ as the MAC vector of the $d$-th basis material, and $\mathbf{p}^{[q]}_j=[p_{j1}^{[q]},...,p_{jI}^{[q]}]\trans$ as the projection vector for ray $j$ under spectrum $q$. The corresponding projection matrix under spectrum $q$ is then expressed as $\mathbf{P}^{[q]} = [\mathbf{p}_1^{[q]}, \dots, \mathbf{p}_{J^{[q]}}^{[q]}]\trans$.  Note that $\mathbf{P}^{[1]}=\dots=\mathbf{P}^{[Q]}$ holds if and only if the scanning configuration is geometric-consistent; otherwise, it corresponds to the geometric-inconsistent case.
	
Hence, the MSCT image reconstruction problem amounts to determining the combined basis image vector $\mathbf{f}$ from the measured projection data $\{g_j^{[q]}\}_{q,j=1}^{Q,J^{[q]}}$ by solving the following system of nonlinear equations
	\begin{equation}\label{nf1}
		K_j^{[q]}(\mathbf{f})=g_j^{[q]} \quad\text{for}\ q=1,\dots,Q\ \text{and}\ j=1,\dots,J^{[q]}.
	\end{equation}
	The compact form of \eqref{nf1} can be expressed by
	\begin{equation}\label{nf2}
		\mathbf{K}(\mathbf{f})=\mathbf{g},
	\end{equation}
	where
	\[
		\mathbf{K}(\mathbf{f})=\begin{bmatrix}
		\mathbf{K}^{[1]}(\mathbf{f})\\ \vdots \\ \mathbf{K}^{[Q]}(\mathbf{f}) 
		\end{bmatrix}, \quad
		\mathbf{K}^{[q]}(\mathbf{f})=\begin{bmatrix}
			 K_1^{[q]}(\mathbf{f}) \\ \vdots \\ K_{J^{[q]}}^{[q]}(\mathbf{f})
		\end{bmatrix}, \quad
		\mathbf{g}=\begin{bmatrix}
			\mathbf{g}^{[1]} \\ \vdots \\ \mathbf{g}^{[Q]} 
		\end{bmatrix}, \quad
		\mathbf{g}^{[q]}=\begin{bmatrix}
			g_1^{[q]}\\ \vdots \\ g_{J^{[q]}}^{[q]} 
		\end{bmatrix}.
	\]

	\section{The proposed algorithm} \label{section3}
	
	In this section, we will propose the efficient and accurate image reconstruction algorithm for  geometric-inconsistent MSCT with ray-dependent energy spectra. Following the AFIRE algorithm, the iterative scheme for solving the nonlinear system \eqref{nf2} is expressed as
	\begin{equation}\label{sni1}
		\mathbf{f}^{k+1}=\mathbf{f}^k+\bigl(D\mathbf{K}(\hat{\mathbf{f}}) \bigr)^{-1}\left(\mathbf{g}-\mathbf{K}(\mathbf{f}^k)\right),
	\end{equation}
	where $\mathbf{f}^k$ denotes the combined basis image vector at the $k$-th iteration, and $\hat{\mathbf{f}}$ is a fixed combined basis image vector, $D\mathbf{K}(\hat{\mathbf{f}})$ denotes the Jacobian matrix of the forward operator $\mathbf{K}$ evaluated at $\hat{\mathbf{f}}$, constructed by vertically stacking $Q$ Jacobian submatrices as follows
	\[
		D \mathbf{K}(\hat{\mathbf{f}})=\begin{bmatrix}
			D \mathbf{K}^{[1]}(\hat{\mathbf{f}}) \\ \vdots \\ D \mathbf{K}^{[Q]}(\hat{\mathbf{f}})
		\end{bmatrix}.
	\]
Here the explicit form of the Jacobian submatrix $D \mathbf{K}^{[q]}(\hat{\mathbf{f}})$ is derived as  
	\[
		D \mathbf{K}^{[q]}(\hat{\mathbf{f}})=\begin{bmatrix}
			w_{11}^{[q]}(\hat{\mathbf{f}})(\mathbf{p}_{1}^{[q]})\trans & \cdots & w_{D1}^{[q]}(\hat{\mathbf{f}})(\mathbf{p}_{1}^{[q]})\trans \\
			\vdots                           & \ddots & \vdots                           \\
			w_{1J^{[q]}}^{[q]}(\hat{\mathbf{f}})(\mathbf{p}_{J^{[q]}}^{[q]})\trans & \cdots & w_{DJ^{[q]}}^{[q]}(\hat{\mathbf{f}})(\mathbf{p}_{J^{[q]}}^{[q]})\trans
		\end{bmatrix},
	\]
	where the involved weight coefficient $w_{dj}^{[q]}(\hat{\mathbf{f}})$ is computed by
	\[
		w_{dj}^{[q]}(\hat{\mathbf{f}})=-\frac{\displaystyle\sum_{m=1}^{M} b_{dm}s^{[q]}_{jm} \exp\left(-\sum_{d^{\prime}=1}^{D} b_{d^{\prime}m} (\mathbf{p}_{j}^{[q]})\trans \hat{\mathbf{f}}_{d^{\prime}} \right) }{\displaystyle\sum_{m=1}^{M} s^{[q]}_{jm} \exp\left(-\sum_{d^{\prime}=1}^{D} b_{d^{\prime}m} (\mathbf{p}_{j}^{[q]})\trans \hat{\mathbf{f}}_{d^{\prime}} \right)}.
	\]
	
Due to the complex structure of the Jacobian matrix, its direct inversion is computationally prohibitive, which makes \eqref{sni1} impractical to implement. Nevertheless, we observe that eliminating the dependence of the weight coefficients on the ray indices can endow the Jacobian matrix with a special structure that enables efficient inversion.
	
Specifically, the dependence of the weight coefficients $w_{dj}^{[q]}(\hat{\mathbf{f}})$ on the ray index $j$ originates from two sources: the ray-dependent energy spectrum $\mathbf{s}_{j}^{[q]}$ and the basis sinogram element $(\mathbf{p}_{j}^{[q]})\trans \hat{\mathbf{f}}_{d^{\prime}}$. 
	
To decouple the basis sinogram from the ray index, we can fix the Jacobian matrix at some special points, analogous to the AFIRE algorithm, for which the corresponding basis sinograms are required to be constant vectors. Specifically, for such a special point $\hat{\mathbf{f}}$, there exists a constant set $\{C_{d}^{[q]}\}_{q,d=1}^{Q,D}$ such that the following identity holds
	\begin{equation}\label{condition1}
		(\mathbf{p}_{j}^{[q]})\trans \hat{\mathbf{f}}_d \equiv C_{d}^{[q]}, \quad \forall j\in\{1,\dots,J^{[q]}\}.
	\end{equation}
	Then the weight coefficients $w_{dj}^{[q]}(\hat{\mathbf{f}})$ are translated into 
	\begin{equation}\label{wc2}
		w_{dj}^{[q]}(\hat{\mathbf{f}})=-\frac{\displaystyle\sum_{m=1}^{M} b_{dm}s^{[q]}_{jm} \exp\left(-\sum_{d^{\prime}=1}^{D} b_{d^{\prime}m} C_{d^{\prime}}^{[q]} \right) }{\displaystyle\sum_{m=1}^{M} s^{[q]}_{jm} \exp\left(-\sum_{d^{\prime}=1}^{D} b_{d^{\prime}m} C_{d^{\prime}}^{[q]} \right)}.
	\end{equation}
	Note that such a special point always exists. For instance, taking $\hat{\mathbf{f}}=\mathbf{0}$ satisfies condition \eqref{condition1} with $C_d^{[q]}=0$ for all $q$, $d$. Substituting $C_d^{[q]}=0$ into \eqref{wc2}, and by \eqref{eq:s_condition}, the corresponding weight coefficients become
	\[
		w_{dj}^{[q]}(\mathbf{0}) = -\sum_{m=1}^{M} b_{dm}s^{[q]}_{jm}.
	\]
	
To further decouple the energy spectrum from the ray index, we employ the aggregated energy spectrum $\bar{\mathbf{s}}^{[q]}$ in place of the ray-dependent energy spectra $\{\mathbf{s}_{j}^{[q]}\}_{j=1}^{J^{[q]}}$. The aggregated spectrum can be chosen, for example, as the mean or median of all or a subset of the ray-dependent energy spectra. For instance, the mean aggregation is given by
	\begin{equation}\label{as1}
		\bar{\mathbf{s}}^{[q]} = [\bar{s}_1^{[q]}, \dots, \bar{s}_M^{[q]}]\trans , \quad \bar{s}_m^{[q]}=\frac{1}{J^{[q]}}\sum_{j=1}^{J^{[q]}} s_{jm}^{[q]}.
	\end{equation}
	By substituting the ray-dependent spectrum in \eqref{wc2} with the aggregated spectrum from \eqref{as1}, we obtain an approximate, ray-independent form of the weight coefficient, denoted as $\phi_{d}^{[q]}$, whose explicit expression is
	\begin{equation}\label{wc3}
		\phi_{d}^{[q]}(\hat{\mathbf{f}})=-\frac{\displaystyle\sum_{m=1}^{M} b_{dm}\bar{s}_m^{[q]} \exp\left(-\sum_{d^{\prime}=1}^{D} b_{d^{\prime}m} C_{d^{\prime}}^{[q]} \right) }{\displaystyle\sum_{m=1}^{M} \bar{s}_m^{[q]} \exp\left(-\sum_{d^{\prime}=1}^{D} b_{d^{\prime}m} C_{d^{\prime}}^{[q]} \right)}.
	\end{equation}
	
Through the above decoupling, the Jacobian submatrix evaluated at the special point $\hat{\mathbf{f}}$ can be approximated as follows
\[
			D \mathbf{K}^{[q]}(\hat{\mathbf{f}})  \approx  \begin{bmatrix}
				\phi_{1}^{[q]}(\hat{\mathbf{f}}) \mathbf{P}^{[q]} & \cdots & \phi_{D}^{[q]}(\hat{\mathbf{f}}) \mathbf{P}^{[q]}
			\end{bmatrix}.
	\]
Accordingly, the Jacobian matrix at $\hat{\mathbf{f}}$ can be approximated as 
	\begin{equation}\label{ajm}
		D \mathbf{K}(\hat{\mathbf{f}}) \approx \mathbf{J}(\hat{\mathbf{f}}) = \begin{bmatrix}
			\phi_{1}^{[1]}(\hat{\mathbf{f}}) \mathbf{P}^{[1]} & \cdots & \phi_{D}^{[1]}(\hat{\mathbf{f}}) \mathbf{P}^{[1]} \\
			\vdots & \ddots & \vdots \\
			\phi_{1}^{[Q]}(\hat{\mathbf{f}}) \mathbf{P}^{[Q]} & \cdots & \phi_{D}^{[Q]}(\hat{\mathbf{f}}) \mathbf{P}^{[Q]}
		\end{bmatrix} = \mathbf{P} \otimes \bm{\phi},
	\end{equation}
	where $\mathbf{J}(\hat{\mathbf{f}})$ denotes the approximate Jacobian matrix,  $\mathbf{P}=\texttt{diag}(\mathbf{P}^{[1]},\cdots,\mathbf{P}^{[Q]})$ is a block diagonal matrix consisting of the $Q$ projection matrices, $\bm{\phi}=(\phi_d^{[q]}(\hat{\mathbf{f}}))_{Q\times D}$ is a very small-scale matrix depending on $\hat{\mathbf{f}}$, and the operation $\otimes$ represents block multiplication between two matrices. For simplicity, we drop the explicit dependence of $\phi_d^{[q]}$ on $\hat{\mathbf{f}}$ throughout the rest of the paper.
	
	Note that when $Q = D$, and both $\bm{\phi}$ and $\mathbf{P}^{[q]}$ for all $q$ are invertible, the approximate Jacobian matrix $\mathbf{J}(\hat{\mathbf{f}})$ becomes invertible. Indeed, according to \eqref{ajm}, its inverse can be readily obtained as $$\bigl(\mathbf{J}(\hat{\mathbf{f}})\bigr)^{-1}=\bm{\phi}^{-1} \otimes \mathbf{P}^{-1},$$ where $\mathbf{P}^{-1}=\texttt{diag}\bigl( (\mathbf{P}^{[1]})^{-1}, \cdots, (\mathbf{P}^{[Q]})^{-1}  \bigr)$. Substituting the approximate Jacobian matrix $\mathbf{J}(\hat{\mathbf{f}})$ for the exact Jacobian matrix $D \mathbf{K}(\hat{\mathbf{f}})$, the iterative scheme \eqref{sni1} becomes
	\begin{equation}\label{sni2}
		\mathbf{f}^{k+1}
		=\mathbf{f}^k+\bm{\phi}^{-1} \otimes \mathbf{P}^{-1} (\mathbf{g}-\mathbf{K}(\mathbf{f}^k)).
	\end{equation}
	
In practice, the number of energy spectra $Q$ does not have to equal the number of basis materials $D$, and it only needs to satisfy $Q\geq D$. Hence, the matrix $\bm{\phi}$ is not necessarily square, and its pseudoinverse $\bm{\phi}^{\dagger}$, for instance, $\bm{\phi}^{\dagger} = (\bm{\phi}\trans\bm{\phi})^{-1}\bm{\phi}\trans$, can be used in place of $\bm{\phi}^{-1}$. Meanwhile, the exact inversion of the matrix $\mathbf{P}$ is infeasible in practice. In real MSCT systems, the projection matrices $\mathbf{P}^{[1]},\dots,\mathbf{P}^{[Q]}$ are not necessarily invertible square matrices, and even when they are, the computational cost of large-scale matrix inversion is prohibitive. We thus adopt an approximation for the inverse of $\mathbf{P}$, and the resulting iterative scheme corresponding to \eqref{sni2} becomes
	\begin{equation}\label{sni3}
		\mathbf{f}^{k+1} = \mathbf{f}^k + \bm{\phi}^{\dagger} \otimes \mathbf{P}^{+} \left(\mathbf{g}-\mathbf{K}(\mathbf{f}^k)\right),
	\end{equation}
	where $\mathbf{P}^{+}=\texttt{diag}\bigl( (\mathbf{P}^{[1]})^{+}, \cdots, (\mathbf{P}^{[Q]})^{+} \bigr)$ represents the approximate inverse of $\mathbf{P}$, and $(\mathbf{P}^{[q]})^{+}$ denotes the approximate inverse of $\mathbf{P}^{[q]}$. The implementation of $(\mathbf{P}^{[q]})^{+}$ can be tailored according to the actual scanning geometry. Specifically, for two-dimensional scenarios, one may employ methods such as filtered back-projection (FBP), ART, optimization-based reconstruction, or deep learning-based reconstruction. For three-dimensional cone-beam imaging, the Feldkamp–Davis–Kress (FDK) algorithm can be adopted. Based on \eqref{sni3}, we summarize the proposed algorithm for solving \eqref{nf2} in \cref{alg:msct_recon}.
	
	\begin{algorithm}[htb]
		\caption{The proposed algorithm for solving \eqref{nf2}}
		\label{alg:msct_recon}
		\begin{algorithmic}[1]
			\State \emph{Initialize}: Given $Q\geq D$, the measured data $\mathbf{g}$, the ray-dependent energy spectra $\{\mathbf{s}_{j}^{[q]}\}_{q,j=1}^{Q,J^{[q]}}$, the MACs $\{\mathbf{b}_d\}_{d=1}^{D}$, the geometric-inconsistent projection matrices $\{\mathbf{P}^{[q]}\}_{q=1}^{Q}$, a certain constant set $\{C_d^{[q]}\}_{q,d=1}^{Q,D}$, a maximum iteration number $K$, 
			and a tolerance $\varepsilon > 0$. Initialized the basis images $\mathbf{f}^0$. Let $k \leftarrow 0$.
			
			\State Compute the matrix $\bm{\phi}=(\phi_d^{[q]})_{Q\times D}$ from \eqref{wc3} and its pseudoinverse $\bm{\phi}^{\dagger}$.
			
			\State \emph{Loop}:
			\State \quad Compute the residual 
			\begin{equation*}
				\begin{bmatrix}
					\mathbf{r}^{[1],k} \\ \vdots \\ \mathbf{r}^{[Q],k} 
				\end{bmatrix}
				\leftarrow 
				\begin{bmatrix}
					\mathbf{g}^{[1]} \\ \vdots \\ \mathbf{g}^{[Q]} 
				\end{bmatrix}
				- \begin{bmatrix}
					\mathbf{K}^{[1]}(\mathbf{f}^k) \\ \vdots \\ \mathbf{K}^{[Q]}(\mathbf{f}^k) 
				\end{bmatrix}.
			\end{equation*}
			
			\State \quad Update $\mathbf{f}^{k+1}$ by
			\begin{equation*}
				\begin{bmatrix}
					\mathbf{f}^{k+1}_1 \\ \vdots \\ \mathbf{f}^{k+1}_D
				\end{bmatrix}
				\leftarrow 
				\begin{bmatrix}
					\mathbf{f}^{k}_1 \\ \vdots \\ \mathbf{f}^{k}_D
				\end{bmatrix}
				+ \bm{\phi}^{\dagger} \otimes
				\begin{bmatrix}
					(\mathbf{P}^{[1]})^{+}\mathbf{r}^{[1],k} \\ \vdots \\ (\mathbf{P}^{[Q]})^{+}\mathbf{r}^{[Q],k} 
				\end{bmatrix}.
			\end{equation*}
			
			\State \quad If $k \geq K$ or $\sum_{q=1}^{Q} \|\mathbf{r}^{[q],k}\| < \varepsilon $, then \textbf{output} $\mathbf{f}^{k+1}$; \newline \indent \hspace{-3.5mm}otherwise, let $k \gets k+1$, \textbf{goto} \emph{Loop}. 
			
		\end{algorithmic}
	\end{algorithm}

\section{Convergence theory}\label{section4}
	Here we will analyze the convergence of the proposed algorithm. Let $\mathbf{f}^{\ast}$ denote the solution to the MSCT image reconstruction problem \eqref{nf2}, provided it exists. Let $\|\cdot\|_F$ and $\sigma_{\min}(\cdot)$ be the Frobenius norm and the minimum singular value of a matrix, respectively. For $\sigma_{\min}(\cdot)\neq 0$, define the scaled condition number as $\kappa_F(\cdot)=\|\cdot\|_F/\sigma_{\min}(\cdot)$. In addition, let $\|\cdot\|_2$ denote the Euclidean norm of a vector or the spectral norm of a matrix.
	
\begin{lemma}\label{lemma:weight_diff}
For any $\delta\in(0,1)$, there exists a constant $\epsilon=2\delta/ (1-\delta) > 0$ such that if the spectral differences $\Delta s_{jm}^{[q]}=s_{jm}^{[q]}-\bar{s}_{m}^{[q]}$ satisfy 
\[
	\bigl|\Delta s_{jm}^{[q]}\bigr| \le \delta s_{jm}^{[q]} \quad \text{for all indices } q,j,m,
\]
then for every $q$, $d$, $j$, the following inequality holds
\[
	\bigl|w_{dj}^{[q]}(\hat{\mathbf{f}})-\phi_{d}^{[q]}\bigr| \leq -\epsilon\phi_{d}^{[q]}.
\]
\end{lemma}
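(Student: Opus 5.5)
The plan is to compare numerator and denominator of the two quotients separately, using that the hypothesis makes each $s_{jm}^{[q]}$ and $\bar s_m^{[q]}$ comparable up to a factor depending only on $\delta$. Fix $q,d,j$ and set $e_m=\exp\bigl(-\sum_{d'=1}^{D} b_{d'm}C_{d'}^{[q]}\bigr)>0$. By \eqref{wc2} and \eqref{wc3},
\[
w_{dj}^{[q]}(\hat{\mathbf{f}})=-\frac{A_j}{B_j},\qquad \phi_d^{[q]}=-\frac{\bar A}{\bar B},
\]
where $A_j=\sum_m b_{dm}s_{jm}^{[q]}e_m$, $B_j=\sum_m s_{jm}^{[q]}e_m$, and $\bar A,\bar B$ are the same sums with $\bar s_m^{[q]}$ in place of $s_{jm}^{[q]}$. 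I will use that the MACs satisfy $b_{dm}\ge 0$, as is physically the case. Then every term in these sums is nonnegative, so $\phi_d^{[q]}\le 0$ and $-\phi_d^{[q]}=\bar A/\bar B$. Moreover $\bar B>0$, since $\sum_m\bar s_m^{[q]}=1$ by \eqref{eq:s_condition} and \eqref{as1}, and $e_m>0$.

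First, the hypothesis $|s_{jm}^{[q]}-\bar s_m^{[q]}|\le\delta s_{jm}^{[q]}$ gives $(1-\delta)s_{jm}^{[q]}\le\bar s_m^{[q]}\le(1+\delta)s_{jm}^{[q]}$. Equivalently,
\[
\frac{\bar s_m^{[q]}}{1+\delta}\le s_{jm}^{[q]}\le\frac{\bar s_m^{[q]}}{1-\delta}\quad\text{for every } m .
\]
Multiplying by the nonnegative weights $b_{dm}e_m$, respectively $e_m$, and summing over $m$ yields
\[
\frac{\bar A}{1+\delta}\le A_j\le\frac{\bar A}{1-\delta},\qquad \frac{\bar B}{1+\delta}\le B_j\le\frac{\bar B}{1-\delta}.
\]
In particular $B_j>0$, so $w_{dj}^{[q]}$ is well defined.

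Dividing these bounds gives
\[
\frac{1-\delta}{1+\delta}\,\frac{\bar A}{\bar B}\le\frac{A_j}{B_j}\le\frac{1+\delta}{1-\delta}\,\frac{\bar A}{\bar B}.
\]
Hence $|A_j/B_j-\bar A/\bar B|\le\max\bigl\{\tfrac{1+\delta}{1-\delta}-1,\;1-\tfrac{1-\delta}{1+\delta}\bigr\}\,\bar A/\bar B$. The two candidates are $2\delta/(1-\delta)$ and $2\delta/(1+\delta)$, so the maximum is $2\delta/(1-\delta)=\epsilon$. Since $|w_{dj}^{[q]}(\hat{\mathbf{f}})-\phi_d^{[q]}|=|A_j/B_j-\bar A/\bar B|$ and $\bar A/\bar B=-\phi_d^{[q]}$, this is exactly the claim.

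The only delicate point is the sign bookkeeping. The argument needs $b_{dm}\ge 0$ so that the sandwich bounds survive summation and so that $-\phi_d^{[q]}\ge 0$ is the right scale for the error. I would state this nonnegativity explicitly as the standing physical assumption on the MACs. The remaining steps are elementary.
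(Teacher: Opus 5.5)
Your proof is correct and gives the same constant $\epsilon=2\delta/(1-\delta)$. It differs from the paper mainly in how the numerator and denominator estimates are combined.

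In the paper's notation, $E_d^{[q]}(\mathbf{s}_j^{[q]})$ and $F^{[q]}(\mathbf{s}_j^{[q]})$ are your $A_j$ and $B_j$, and its $A_m^{[q]}$ is your $e_m$. The paper first shows $\bigl|F^{[q]}(\mathbf{s}_j^{[q]})-F^{[q]}(\bar{\mathbf{s}}^{[q]})\bigr|\le\delta F^{[q]}(\mathbf{s}_j^{[q]})$ and the analogous bound for $E_d^{[q]}$. These are your sandwich inequalities written in additive form. It then tacitly uses the quotient-perturbation identity (indices suppressed)
\[
E(\mathbf{s}_j)F(\bar{\mathbf{s}})-E(\bar{\mathbf{s}})F(\mathbf{s}_j)=E(\mathbf{s}_j)\bigl(F(\bar{\mathbf{s}})-F(\mathbf{s}_j)\bigr)+F(\mathbf{s}_j)\bigl(E(\mathbf{s}_j)-E(\bar{\mathbf{s}})\bigr).
\]
This bounds the numerator of the relative error $|w_{dj}^{[q]}-\phi_d^{[q]}|/|\phi_d^{[q]}|$ by $2\delta E(\mathbf{s}_j)F(\mathbf{s}_j)$. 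The inequality $E(\bar{\mathbf{s}})\ge(1-\delta)E(\mathbf{s}_j)$ then bounds the denominator $F(\mathbf{s}_j)E(\bar{\mathbf{s}})$ from below.

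You instead divide the two sandwiches directly, placing $(A_j/B_j)/(\bar A/\bar B)$ in $\bigl[\tfrac{1-\delta}{1+\delta},\tfrac{1+\delta}{1-\delta}\bigr]$. Both arguments rest on the same ingredient, the nonnegativity of the weights $b_{dm}e_m$. You state $b_{dm}\ge 0$ explicitly, whereas the paper uses it silently, both in asserting $E_d^{[q]}>0$ and in its numerator bound.

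Your version is also slightly more informative. It never divides by $\phi_d^{[q]}$, so the degenerate case $\bar A=0$ needs no special treatment. It also shows the error is asymmetric, with undershoot at most $2\delta/(1+\delta)$.
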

	\begin{proof}
		For brevity, define
		\begin{align*}
				A_m^{[q]} &= \exp\Biggl(-\sum_{d=1}^{D} b_{dm} C_d^{[q]}\Biggr), \\
				E_{d}^{[q]} (\mathbf{s}) &= \sum_{m=1}^{M} b_{dm}s_m A_m^{[q]}>0, \\
				 F^{[q]} (\mathbf{s}) &= \sum_{m=1}^{M} s_m A_m^{[q]}>0,
		\end{align*}
		where $\mathbf{s} = [s_1,\dots,s_M]\trans$.
		Recall the definitions that $w_{dj}^{[q]}(\hat{\mathbf{f}}) = -E_d^{[q]}(\mathbf{s}_j^{[q]})/F^{[q]}(\mathbf{s}_j^{[q]})$ and $\phi_d^{[q]} = -E_d^{[q]}(\bar{\mathbf{s}}^{[q]})/F^{[q]}(\bar{\mathbf{s}}^{[q]})$.
		We first establish bounds that hold uniformly for all indices. For the denominator difference, using the given condition,
		\begin{align*}
				\bigl|F^{[q]}(\mathbf{s}_j^{[q]}) - F^{[q]}(\bar{\mathbf{s}}^{[q]}) \bigr| &= 
				\left| \sum_{m=1}^{M} \Delta s_{jm}^{[q]} A_m^{[q]} \right| \\
				& \leq \delta \sum_{m=1}^{M} s_{jm}^{[q]} A_m^{[q]} = \delta F^{[q]}(\mathbf{s}_j^{[q]}),
		\end{align*}
		where the inequalities hold for all $q$, $j$. Similarly, for the numerator difference,
		\[
			\bigl|E_d^{[q]}(\mathbf{s}_j^{[q]}) - E_d^{[q]}(\bar{\mathbf{s}}^{[q]}) \bigr|\leq \delta E_d^{[q]}(\mathbf{s}_j^{[q]}), \quad \forall  q, d, j.
		\]
		Using the triangle inequality and the fact that $\delta<1$, we obtain for all $q$, $d$, $j$,
		\[
			E_d^{[q]}(\bar{\mathbf{s}}^{[q]}) \geq (1-\delta) E_d^{[q]}(\mathbf{s}_j^{[q]}).
		\]
		Next we compute the relative difference, for any $q$, $d$, $j$,
		\[
				\frac{| w_{dj}^{[q]}(\hat{\mathbf{f}})  -  \phi_{d}^{[q]}|}{ |\phi_{d}^{[q]}|}  
				\leq \frac{2\delta E_d^{[q]}(\mathbf{s}_j^{[q]})}{(1-\delta)E_d^{[q]}(\mathbf{s}_j^{[q]})}=\frac{2\delta}{1-\delta}.
		\]
		Thus, taking $\epsilon=2\delta/(1-\delta)$, we have for all $q$, $d$, $j$
		\[
			\bigl|w_{dj}^{[q]}(\hat{\mathbf{f}})  -  \phi_{d}^{[q]}   \bigr | \leq  -\epsilon \phi_{d}^{[q]}.
		\]
		This completes the proof.
	\end{proof}
	
	This lemma establishes that if the spectral differences between the ray-dependent energy spectra $\{\mathbf{s}_{j}^{[q]}\}_{j=1}^{J^{[q]}}$ and their aggregated version $\bar{\mathbf{s}}^{[q]}$ is sufficiently small, then the relative difference between the weighting coefficients involved in the approximate Jacobian matrix $\mathbf{J}(\hat{\mathbf{f}})$ and those in the exact Jacobian $D \mathbf{K}(\hat{\mathbf{f}})$ is bounded. Building on this lemma, we can further derive the following result.
	
	\begin{lemma}\label{lemma:jacobian_diff}
		Assume that $\mathbf{J}(\hat{\mathbf{f}})$ has full column rank. Let $\delta\in(0,1)$ and suppose the spectral differences $\Delta s_{jm}^{[q]}$ satisfy the condition in \cref{lemma:weight_diff}. Then there exists a constant $\gamma>0$ such that for any $\mathbf{h}$,
		\[
		\big\|D \mathbf{K}(\hat{\mathbf{f}})\mathbf{h}-\mathbf{J}(\hat{\mathbf{f}})\mathbf{h}\big\|_2
		\leq \gamma\big\|\mathbf{J}(\hat{\mathbf{f}})\mathbf{h}\big\|_2.
		\]
	\end{lemma}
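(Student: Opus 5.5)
The plan is to write the difference $D\mathbf{K}(\hat{\mathbf{f}})-\mathbf{J}(\hat{\mathbf{f}})$ row by row, bound it in Frobenius norm using \cref{lemma:weight_diff}, and then convert this into a bound relative to $\|\mathbf{J}(\hat{\mathbf{f}})\mathbf{h}\|_2$ through the full column rank assumption. Since $\mathbf{J}(\hat{\mathbf{f}})$ has full column rank, $\sigma_{\min}(\mathbf{J}(\hat{\mathbf{f}}))>0$. Hence $\|\mathbf{h}\|_2\le \|\mathbf{J}(\hat{\mathbf{f}})\mathbf{h}\|_2/\sigma_{\min}(\mathbf{J}(\hat{\mathbf{f}}))$ for every $\mathbf{h}$, and it will suffice to show that $\|D\mathbf{K}(\hat{\mathbf{f}})-\mathbf{J}(\hat{\mathbf{f}})\|_2$ is finite and explicitly controlled.

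First, I will fix the row indexed by $(q,j)$. In $D\mathbf{K}(\hat{\mathbf{f}})$ this row is $[w_{1j}^{[q]}(\hat{\mathbf{f}})(\mathbf{p}_j^{[q]})\trans,\dots,w_{Dj}^{[q]}(\hat{\mathbf{f}})(\mathbf{p}_j^{[q]})\trans]$, and in $\mathbf{J}(\hat{\mathbf{f}})$ it is $[\phi_1^{[q]}(\mathbf{p}_j^{[q]})\trans,\dots,\phi_D^{[q]}(\mathbf{p}_j^{[q]})\trans]$. The difference in the $d$-th block is therefore $(w_{dj}^{[q]}-\phi_d^{[q]})(\mathbf{p}_j^{[q]})\trans$. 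By \cref{lemma:weight_diff}, its norm is at most $\epsilon|\phi_d^{[q]}|\,\|\mathbf{p}_j^{[q]}\|_2$. Summing squares over $d$, $j$ and $q$ gives
\[
\|D\mathbf{K}(\hat{\mathbf{f}})-\mathbf{J}(\hat{\mathbf{f}})\|_F^2\le \epsilon^2\sum_{q,j,d}(\phi_d^{[q]})^2\|\mathbf{p}_j^{[q]}\|_2^2=\epsilon^2\|\mathbf{J}(\hat{\mathbf{f}})\|_F^2 .
\]
The last equality holds because the Frobenius norm of $\mathbf{J}(\hat{\mathbf{f}})$ is computed entrywise over exactly these blocks.

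Combining the two steps, and using that the spectral norm is bounded by the Frobenius norm, I obtain
\[
\|(D\mathbf{K}(\hat{\mathbf{f}})-\mathbf{J}(\hat{\mathbf{f}}))\mathbf{h}\|_2\le \epsilon\|\mathbf{J}(\hat{\mathbf{f}})\|_F\|\mathbf{h}\|_2\le \epsilon\,\kappa_F(\mathbf{J}(\hat{\mathbf{f}}))\,\|\mathbf{J}(\hat{\mathbf{f}})\mathbf{h}\|_2 .
\]
This gives the claim with $\gamma=\epsilon\,\kappa_F(\mathbf{J}(\hat{\mathbf{f}}))=\frac{2\delta}{1-\delta}\kappa_F(\mathbf{J}(\hat{\mathbf{f}}))$, which explains why the scaled condition number $\kappa_F$ was introduced.

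I expect the main subtlety to be the step from a bound in terms of $\|\mathbf{h}\|_2$ to one in terms of $\|\mathbf{J}\mathbf{h}\|_2$. A direct row-wise comparison would fail because the weights $w_{dj}^{[q]}-\phi_d^{[q]}$ can differ in sign across $d$, so the $D$ blocks do not recombine into a multiple of the corresponding row of $\mathbf{J}\mathbf{h}$. The full column rank assumption is exactly what lets us bypass this obstacle, at the cost of the factor $\kappa_F$.
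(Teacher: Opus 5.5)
Your proposal is correct and follows the paper's proof essentially verbatim. You bound $\|D\mathbf{K}(\hat{\mathbf{f}})-\mathbf{J}(\hat{\mathbf{f}})\|_F\le\epsilon\|\mathbf{J}(\hat{\mathbf{f}})\|_F$ blockwise via \cref{lemma:weight_diff} (using $\phi_d^{[q]}<0$), then pass from $\|\mathbf{h}\|_2$ to $\|\mathbf{J}(\hat{\mathbf{f}})\mathbf{h}\|_2$ through $\sigma_{\min}(\mathbf{J}(\hat{\mathbf{f}}))>0$, obtaining the same constant $\gamma=\epsilon\,\kappa_F(\mathbf{J}(\hat{\mathbf{f}}))$; your remark on why a row-wise comparison fails usefully explains the role of the full column rank hypothesis, which the paper leaves implicit.
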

	\begin{proof}
		By \cref{lemma:weight_diff}, there exists a constant $\epsilon>0$ such that for all $q$, $d$, $j$,
		\[
		\bigl|w_{dj}^{[q]}(\hat{\mathbf{f}})-\phi_{d}^{[q]}\bigr| \leq -\epsilon \phi_{d}^{[q]}.
		\]
		Using this bound, we obtain
		\begin{align*}
				\big\|D \mathbf{K}(\hat{\mathbf{f}})-\mathbf{J}(\hat{\mathbf{f}})\big\|_F^2
				& =\sum_{q=1}^{Q}\sum_{j=1}^{J^{[q]}}\sum_{d=1}^{D} \big(w_{dj}^{[q]}(\hat{\mathbf{f}})-\phi_{d}^{[q]}\big)^2 \big\|\mathbf{p}_{j}^{[q]}\big\|_2^2 \\
				& \leq \epsilon^2 \big\| \mathbf{J}(\hat{\mathbf{f}})\big\|_F^2.
		\end{align*}
		Applying standard matrix norm inequalities,
		\begin{align*}
				\big\|D \mathbf{K}(\hat{\mathbf{f}})\mathbf{h}-\mathbf{J}(\hat{\mathbf{f}})\mathbf{h}\big\|_2 
				& \leq \big\|D \mathbf{K}(\hat{\mathbf{f}})-\mathbf{J}(\hat{\mathbf{f}})\big\|_F \|\mathbf{h}\|_2 \\
				& \leq \epsilon\kappa_F\bigl(\mathbf{J}(\hat{\mathbf{f}})\bigr) \big\|\mathbf{J}(\hat{\mathbf{f}})\mathbf{h}\big\|_2.
		\end{align*}
		Letting $\gamma = \epsilon\kappa_F\bigl(\mathbf{J}(\hat{\mathbf{f}})\bigr)$, we complete the proof.
	\end{proof}
	
	The preceding lemma implies that, provided the spectral difference is sufficiently small, there exists an upper bound on the relative difference between the approximate Jacobian matrix $\mathbf{J}(\hat{\mathbf{f}})$ and the exact Jacobian matrix $D \mathbf{K}(\hat{\mathbf{f}})$. 	
	
	We next establish an upper bound on the relative difference between the exact Jacobian matrices evaluated at distinct points.
	
	\begin{lemma}\label{lemma:jacobian_exact_diff}
		Assume that $D\mathbf{K}(\hat{\mathbf{f}})$ has full column rank. Then there exists a constant $\eta>0$ such that for any $\mathbf{f}$ and $\mathbf{h}$,
		\[
		\big\|D\mathbf{K}(\hat{\mathbf{f}})\mathbf{h}-D\mathbf{K}(\mathbf{f})\mathbf{h}\big\|_2
		\leq \eta\big\|D\mathbf{K}(\hat{\mathbf{f}})\mathbf{h}\big\|_2.
		\]
	\end{lemma}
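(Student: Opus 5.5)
The plan follows the template of \cref{lemma:jacobian_diff}: bound the Frobenius norm of the difference $D\mathbf{K}(\hat{\mathbf{f}})-D\mathbf{K}(\mathbf{f})$ uniformly in $\mathbf{f}$, then convert it to a bound relative to $\|D\mathbf{K}(\hat{\mathbf{f}})\mathbf{h}\|_2$ using the full column rank of $D\mathbf{K}(\hat{\mathbf{f}})$. The key observation is that for every point $\mathbf{f}$ the weight $-w_{dj}^{[q]}(\mathbf{f})$ is a convex combination of the values $b_{d1},\dots,b_{dM}$. Indeed,
\[
-w_{dj}^{[q]}(\mathbf{f})=\sum_{m=1}^M \lambda_{jm}^{[q]}(\mathbf{f})\, b_{dm},\qquad \lambda_{jm}^{[q]}(\mathbf{f})=\frac{s_{jm}^{[q]}e^{-\sum_{d'}b_{d'm}(\mathbf{p}_j^{[q]})\trans\mathbf{f}_{d'}}}{\sum_{m'} s_{jm'}^{[q]}e^{-\sum_{d'}b_{d'm'}(\mathbf{p}_j^{[q]})\trans\mathbf{f}_{d'}}}.
\]
Here $\lambda_{jm}^{[q]}\ge 0$ and $\sum_m\lambda_{jm}^{[q]}=1$. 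Hence $w_{dj}^{[q]}(\mathbf{f})\in[-\max_m b_{dm},-\min_m b_{dm}]$ for every $\mathbf{f}$, with no dependence on $\mathbf{f}$ at all. This is the step that removes the quantifier ``for any $\mathbf{f}$''. No smallness or Lipschitz argument is needed; only boundedness of the weights is used.

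The proof then proceeds in three steps.

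\textbf{Step 1.} Set $\beta_d=\max_m b_{dm}-\min_m b_{dm}\ge 0$. For all $q,j,d$ and all $\mathbf{f}$,
\[
|w_{dj}^{[q]}(\hat{\mathbf{f}})-w_{dj}^{[q]}(\mathbf{f})|\le\beta_d.
\]

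\textbf{Step 2.} Expand the Frobenius norm row by row and block by block, exactly as in \cref{lemma:jacobian_diff}:
\[
\|D\mathbf{K}(\hat{\mathbf{f}})-D\mathbf{K}(\mathbf{f})\|_F^2=\sum_{q,j,d}\bigl(w_{dj}^{[q]}(\hat{\mathbf{f}})-w_{dj}^{[q]}(\mathbf{f})\bigr)^2\|\mathbf{p}_j^{[q]}\|_2^2\le \Bigl(\max_d\beta_d^2\Bigr)\sum_{q,j}D\,\|\mathbf{p}_j^{[q]}\|_2^2=:B^2.
\]
The constant $B$ depends only on the MACs and on the projection matrices.

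\textbf{Step 3.} Since $D\mathbf{K}(\hat{\mathbf{f}})$ has full column rank, $\sigma_{\min}(D\mathbf{K}(\hat{\mathbf{f}}))>0$, so $\|\mathbf{h}\|_2\le \|D\mathbf{K}(\hat{\mathbf{f}})\mathbf{h}\|_2/\sigma_{\min}(D\mathbf{K}(\hat{\mathbf{f}}))$. Combining this with Step 2 gives
\[
\|D\mathbf{K}(\hat{\mathbf{f}})\mathbf{h}-D\mathbf{K}(\mathbf{f})\mathbf{h}\|_2\le B\|\mathbf{h}\|_2\le \frac{B}{\sigma_{\min}(D\mathbf{K}(\hat{\mathbf{f}}))}\|D\mathbf{K}(\hat{\mathbf{f}})\mathbf{h}\|_2 .
\]
So we take $\eta=B/\sigma_{\min}(D\mathbf{K}(\hat{\mathbf{f}}))$. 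If $B=0$, any positive $\eta$ works.

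If one wants the constant in the paper's style, bound $B$ relative to $\|D\mathbf{K}(\hat{\mathbf{f}})\|_F$. Using $|w_{dj}^{[q]}(\hat{\mathbf{f}})|\ge\min_m b_{dm}$, which is positive for physical MACs, we get $B\le c\,\|D\mathbf{K}(\hat{\mathbf{f}})\|_F$ with $c=\max_d\beta_d/\min_{d,m}b_{dm}$. This gives $\eta=c\,\kappa_F(D\mathbf{K}(\hat{\mathbf{f}}))$.

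The main obstacle is the uniformity in $\mathbf{f}$, and the convex-combination representation handles it. The only remaining subtleties are the positivity of the MACs needed for the $\kappa_F$ form, and the fact that $\eta$ need not be small. The statement only asks for existence, so a large $\eta$ is acceptable.
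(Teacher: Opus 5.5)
Your proposal is correct and follows the paper's proof essentially step for step: you bound the weights uniformly via $|w_{dj}^{[q]}(\hat{\mathbf{f}})-w_{dj}^{[q]}(\mathbf{f})|\le \max_m b_{dm}-\min_m b_{dm}$, then bound $\|D\mathbf{K}(\hat{\mathbf{f}})-D\mathbf{K}(\mathbf{f})\|_F$ in terms of $\|\mathbf{P}\|_F$, and finally convert using $\sigma_{\min}(D\mathbf{K}(\hat{\mathbf{f}}))>0$. Your convex-combination representation justifies the weight bounds that the paper only asserts, while your $B^2=D\max_d\beta_d^2\,\|\mathbf{P}\|_F^2$ is slightly looser than the paper's $\bigl(\sum_d \beta_d^2\bigr)\|\mathbf{P}\|_F^2$; keeping $\beta_d^2$ inside the sum over $d$ recovers the paper's constant exactly.
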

	\begin{proof}
		For any $\mathbf{f}$, the values $w_{dj}^{[q]}(\mathbf{f})$ satisfy
		\[
			-\max_m \{b_{dm}\} \leq w_{dj}^{[q]}(\mathbf{f}) \leq -\min_m \{b_{dm}\}.
		\]
		Then it follows that
		\[
			\bigl|w_{dj}^{[q]}(\hat{\mathbf{f}})-w_{dj}^{[q]}(\mathbf{f})\bigr| \leq \Delta_{d},
		\]
		where $\Delta_{d} = \max_m \{b_{dm}\} - \min_m \{b_{dm}\}$. Using this bound, we obtain
		\[
				\big\|D\mathbf{K}(\hat{\mathbf{f}})-D\mathbf{K}(\mathbf{f})\big\|_F^2
				 \leq \left(\sum_{d=1}^{D} \Delta_{d}^2\right) \|\mathbf{P}\|_F^2 .
		\]
		Then for any $\mathbf{f}$ and $\mathbf{h}$,
		\begin{align*}
				\big\|D \mathbf{K}(\hat{\mathbf{f}})\mathbf{h}-D\mathbf{K}(\mathbf{f})\mathbf{h}\big\|_2 
				 &\leq \big\|D \mathbf{K}(\hat{\mathbf{f}})-D\mathbf{K}(\mathbf{f})\big\|_F \|\mathbf{h}\|_2 \\
				 &\leq \sqrt{\sum_{d=1}^{D} \Delta_{d}^2} \|\mathbf{P}\|_F \sigma_{\min}^{-1}\bigl(D\mathbf{K}(\hat{\mathbf{f}})\bigr) \big\|D\mathbf{K}(\hat{\mathbf{f}})\mathbf{h}\big\|_2.
		\end{align*}
		Letting $\eta = \sqrt{\sum_{d=1}^{D} \Delta_{d}^2} \|\mathbf{P}\|_F \sigma_{\min}^{-1}\bigl(D\mathbf{K}(\hat{\mathbf{f}})\bigr)$, we complete the proof.
	\end{proof}
	
	In practical computations, the iterative scheme \eqref{sni3} may introduce both approximation and numerical errors. At each iteration, the residual can be defined as
	\begin{equation}\label{res1}
		\bm{\delta}^k = \mathbf{J}(\hat{\mathbf{f}}) (\mathbf{f}^{k+1}-\mathbf{f}^{k})+\big(\mathbf{K}(\mathbf{f}^{k})-\mathbf{g}\big).
	\end{equation}
	Under appropriate conditions, we have the following global convergence result for the iterative scheme \eqref{sni3}.
	
	\begin{theorem}
	Assume that the conditions in \cref{lemma:jacobian_exact_diff} hold and the $\eta < 1$. Then the solution to \eqref{nf2} is unique.
	
	Moreover, suppose that the conditions in \cref{lemma:jacobian_diff} hold, and for every $k$, the residual in \eqref{res1} satisfies
		\begin{equation}\label{eq:delta_estimate}
			\|\bm{\delta}^k\|_2 \leq \zeta \|\mathbf{K}(\mathbf{f}^{k})-\mathbf{g}\|_2,
		\end{equation}
		with the constants satisfying
		\[
			(1+\zeta)(1+\gamma)(1+\eta)<2.
		\]
		Then the sequence $\{\mathbf{f}^k\}$ generated by the iterative scheme \eqref{sni3} converges to the unique solution of \eqref{nf2}.
	\end{theorem}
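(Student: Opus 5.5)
The plan is to use the mean value theorem in integral form for $\mathbf{K}$, combined with \cref{lemma:jacobian_exact_diff}, to get a lower bound on $\|\mathbf{K}(\mathbf{f})-\mathbf{K}(\mathbf{f}')\|_2$. Then \cref{lemma:jacobian_diff} and the residual condition \eqref{eq:delta_estimate} give a contraction of the residual $\mathbf{K}(\mathbf{f}^k)-\mathbf{g}$.

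\emph{Uniqueness.} Let $\mathbf{f},\mathbf{f}'$ both solve \eqref{nf2} and set $\mathbf{h}=\mathbf{f}-\mathbf{f}'$. Since $\mathbf{K}$ is smooth,
\[
0=\mathbf{K}(\mathbf{f})-\mathbf{K}(\mathbf{f}')=\int_0^1 D\mathbf{K}(\mathbf{f}'+t\mathbf{h})\mathbf{h}\,\mathrm{d}t .
\]
Write $D\mathbf{K}(\mathbf{f}'+t\mathbf{h})\mathbf{h}=D\mathbf{K}(\hat{\mathbf{f}})\mathbf{h}-\bigl(D\mathbf{K}(\hat{\mathbf{f}})-D\mathbf{K}(\mathbf{f}'+t\mathbf{h})\bigr)\mathbf{h}$ and apply \cref{lemma:jacobian_exact_diff} pointwise in $t$. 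This gives
\[
\|\mathbf{K}(\mathbf{f})-\mathbf{K}(\mathbf{f}')\|_2\ge(1-\eta)\|D\mathbf{K}(\hat{\mathbf{f}})\mathbf{h}\|_2 .
\]
Because $\eta<1$ and $D\mathbf{K}(\hat{\mathbf{f}})$ has full column rank, we get $\mathbf{h}=0$. The same argument gives the two-sided estimate, used later,
\[
(1-\eta)\|D\mathbf{K}(\hat{\mathbf{f}})\mathbf{h}\|_2\le\|\mathbf{K}(\mathbf{f})-\mathbf{K}(\mathbf{f}')\|_2\le(1+\eta)\|D\mathbf{K}(\hat{\mathbf{f}})\mathbf{h}\|_2 .
\]

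\emph{Residual contraction.} Let $\mathbf{h}^k=\mathbf{f}^{k+1}-\mathbf{f}^k$ and $\mathbf{r}^k=\mathbf{K}(\mathbf{f}^k)-\mathbf{g}$. Then
\[
\mathbf{r}^{k+1}=\mathbf{r}^k+\int_0^1 D\mathbf{K}(\mathbf{f}^k+t\mathbf{h}^k)\mathbf{h}^k\,\mathrm{d}t
=\bm{\delta}^k+\int_0^1\bigl(D\mathbf{K}(\mathbf{f}^k+t\mathbf{h}^k)-\mathbf{J}(\hat{\mathbf{f}})\bigr)\mathbf{h}^k\,\mathrm{d}t,
\]
using \eqref{res1}. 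The difference inside the integral splits as $\bigl(D\mathbf{K}(\cdot)-D\mathbf{K}(\hat{\mathbf{f}})\bigr)+\bigl(D\mathbf{K}(\hat{\mathbf{f}})-\mathbf{J}(\hat{\mathbf{f}})\bigr)$. By \cref{lemma:jacobian_exact_diff} and \cref{lemma:jacobian_diff}, its action on $\mathbf{h}^k$ is bounded by
\[
\eta\|D\mathbf{K}(\hat{\mathbf{f}})\mathbf{h}^k\|_2+\gamma\|\mathbf{J}\mathbf{h}^k\|_2\le\bigl(\eta(1+\gamma)+\gamma\bigr)\|\mathbf{J}\mathbf{h}^k\|_2,
\]
where the inequality uses $\|D\mathbf{K}(\hat{\mathbf{f}})\mathbf{h}\|_2\le(1+\gamma)\|\mathbf{J}\mathbf{h}\|_2$ from \cref{lemma:jacobian_diff}. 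Also, by \eqref{res1} and \eqref{eq:delta_estimate},
\[
\|\mathbf{J}\mathbf{h}^k\|_2\le\|\bm{\delta}^k\|_2+\|\mathbf{r}^k\|_2\le(1+\zeta)\|\mathbf{r}^k\|_2 .
\]
Combining these,
\[
\|\mathbf{r}^{k+1}\|_2\le\bigl[\zeta+(1+\zeta)\bigl((1+\gamma)(1+\eta)-1\bigr)\bigr]\|\mathbf{r}^k\|_2=\bigl[(1+\zeta)(1+\gamma)(1+\eta)-1\bigr]\|\mathbf{r}^k\|_2 .
\]
By hypothesis the bracket $\rho$ lies in $[0,1)$, so $\mathbf{r}^k\to0$ geometrically.

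\emph{Convergence of the iterates.} It remains to show $\{\mathbf{f}^k\}$ converges. From the bounds above, $\|\mathbf{J}\mathbf{h}^k\|_2\le(1+\zeta)\rho^k\|\mathbf{r}^0\|_2$. Since $\mathbf{J}$ has full column rank, $\|\mathbf{h}^k\|_2\le\sigma_{\min}(\mathbf{J})^{-1}(1+\zeta)\rho^k\|\mathbf{r}^0\|_2$. The increments are therefore summable, so $\{\mathbf{f}^k\}$ is Cauchy and converges to some $\mathbf{f}^\infty$. Continuity of $\mathbf{K}$ gives $\mathbf{K}(\mathbf{f}^\infty)=\mathbf{g}$, and uniqueness from the first step identifies $\mathbf{f}^\infty=\mathbf{f}^\ast$.

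The main obstacle is the bookkeeping in the contraction step. Both lemmas are stated relative to different reference quantities: \cref{lemma:jacobian_exact_diff} is relative to $D\mathbf{K}(\hat{\mathbf{f}})\mathbf{h}$, and \cref{lemma:jacobian_diff} is relative to $\mathbf{J}\mathbf{h}$. These must be converted consistently to land exactly on the factor $(1+\zeta)(1+\gamma)(1+\eta)-1$. Note that neither lemma's bound depends on $\mathbf{f}$, which is what makes the argument global.
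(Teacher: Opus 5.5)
Your proof is correct. The uniqueness step is essentially the paper's: the paper substitutes a second solution into the mean-value estimate $\|\mathbf{K}(\mathbf{f})-\mathbf{K}(\mathbf{f}^\ast)-D\mathbf{K}(\hat{\mathbf{f}})(\mathbf{f}-\mathbf{f}^\ast)\|_2\le\eta\|D\mathbf{K}(\hat{\mathbf{f}})(\mathbf{f}-\mathbf{f}^\ast)\|_2$, which is your reverse-triangle bound. For convergence, however, you contract a different quantity than the paper does.

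\emph{The paper's route.} It contracts the weighted error $\|D\mathbf{K}(\hat{\mathbf{f}})(\mathbf{f}^{k}-\mathbf{f}^\ast)\|_2$. It adds and subtracts $\mathbf{J}(\hat{\mathbf{f}})(\mathbf{f}^{k+1}-\mathbf{f}^k)$ and applies the mean-value estimate at $\mathbf{f}^k$ relative to $\mathbf{f}^\ast$. This splits the error into a $\gamma$-term, a $\bm{\delta}^k$-term and an $\eta$-term, bounded by $\gamma(1+\zeta)(1+\eta)$, $\zeta(1+\eta)$ and $\eta$ respectively. These sum to the same factor $\rho=(1+\zeta)(1+\gamma)(1+\eta)-1$. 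Full column rank of $D\mathbf{K}(\hat{\mathbf{f}})$ then gives $\mathbf{f}^k\to\mathbf{f}^\ast$ directly. This yields a per-step error bound, which the paper turns into the Q-linear estimate with the factor $\kappa\bigl(D\mathbf{K}(\hat{\mathbf{f}})\bigr)\rho$ after the theorem. The cost is that it presupposes a solution $\mathbf{f}^\ast$ exists.

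\emph{Your route.} You contract the residual $\mathbf{K}(\mathbf{f}^k)-\mathbf{g}$. You then get convergence of the iterates from summability of the increments, using full column rank of $\mathbf{J}(\hat{\mathbf{f}})$ instead. You never touch $\mathbf{f}^\ast$ until the last line. The limit $\mathbf{f}^\infty$ is shown to satisfy $\mathbf{K}(\mathbf{f}^\infty)=\mathbf{g}$, so you actually obtain \emph{existence} of a solution under the theorem's hypotheses rather than assuming it. Your contracted quantity is also exactly the one monitored in the algorithm's stopping rule. You get an explicit R-linear bound $\|\mathbf{f}^k-\mathbf{f}^\infty\|_2\le(1+\zeta)\rho^k\|\mathbf{K}(\mathbf{f}^0)-\mathbf{g}\|_2/\bigl((1-\rho)\,\sigma_{\min}(\mathbf{J}(\hat{\mathbf{f}}))\bigr)$, though not the paper's Q-linear per-step statement.

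A minor point: the upper half of your two-sided estimate is announced as ``used later'' but is never used.
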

	\begin{proof}
		By the mean value theorem and \cref{lemma:jacobian_exact_diff}, for any $\mathbf{f}$, it holds that 
		\begin{align}\label{eq:estimate}
			\nonumber	\big\|\mathbf{K}(&\mathbf{f}) - \mathbf{K}(\mathbf{f}^\ast) - D \mathbf{K}(\hat{\mathbf{f}})(\mathbf{f} - \mathbf{f}^\ast)   \big\|_2   \\
			\nonumber &\leq  \int_{0}^{1} \big\| D \mathbf{K}\bigl(t\mathbf{f}+(1-t)\mathbf{f}^\ast\bigr)(\mathbf{f} - \mathbf{f}^\ast) - D \mathbf{K}(\hat{\mathbf{f}})(\mathbf{f} - \mathbf{f}^\ast) \big\|_2 \diff t \\
				&\leq  \eta\big\| D \mathbf{K}(\hat{\mathbf{f}})(\mathbf{f} - \mathbf{f}^\ast) \big\|_2.
		\end{align}
We now prove uniqueness by contradiction. Suppose that there exists another solution $\mathbf{f}^{\prime}$.
		Substituting it into the above inequality yields
		\[
			\big\|   D \mathbf{K}(\hat{\mathbf{f}})(\mathbf{f}^{\prime} - \mathbf{f}^\ast) \big\|_2 
			\leq \eta\big\| D \mathbf{K}(\hat{\mathbf{f}})(\mathbf{f}^{\prime} - \mathbf{f}^\ast) \big\|_2.
		\]
		Since $\eta<1$, we conclude that
		\[
			\big\|   D \mathbf{K}(\hat{\mathbf{f}})(\mathbf{f}^{\prime} - \mathbf{f}^\ast) \big\|_2=0.
		\]
		As $D \mathbf{K}(\hat{\mathbf{f}})$ has full column rank, it follows that $\mathbf{f}^{\prime} = \mathbf{f}^\ast$,
		which establishes the uniqueness of the solution.
		
		We next derive the contraction property. For \eqref{eq:estimate}, using the triangle inequality, we obtain
		\begin{equation}\label{eq:estimate2}
			\big\| \mathbf{K}(\mathbf{f}) - \mathbf{K}(\mathbf{f}^\ast) \big\|_2  \leq (1+\eta) \big\| D \mathbf{K}(\hat{\mathbf{f}})(\mathbf{f} - \mathbf{f}^\ast) \big\|_2.
		\end{equation}
		By adding and subtracting the term $\mathbf{J}(\hat{\mathbf{f}})(\mathbf{f}^{k+1}-\mathbf{f}^k)$, applying the triangle inequality repeatedly, and using \eqref{res1}-\eqref{eq:estimate2}, we have
		\begin{equation}\label{theorem:estimated}
			 \big\| D \mathbf{K}(\hat{\mathbf{f}})(\mathbf{f}^{k+1} - \mathbf{f}^\ast) \big\|_2 
				\leq  \big((1+ \zeta)(1 + \gamma)(1 + \eta) - 1\big) \big\| D \mathbf{K}(\hat{\mathbf{f}})(\mathbf{f}^k  - \mathbf{f}^\ast) \big\|_2.
		\end{equation}
		Since $(1+\zeta)(1+\gamma)(1+\eta) - 1 < 1$,
		the sequence $\big\{ \big\| D\mathbf{K}(\hat{\mathbf{f}})(\mathbf{f}^{k}-\mathbf{f}^\ast) \big\|_2 \big\}$ converges to zero.
		Combined with the full column rank condition of $D \mathbf{K}(\hat{\mathbf{f}})$,
		the iterative sequence $\{\mathbf{f}^k\}$ converges to $\mathbf{f}^\ast$.
		This completes the proof.
	\end{proof}
	
	Furthermore, for the iterative scheme \eqref{sni3}, based on \eqref{theorem:estimated}, we have 
	\begin{multline*}
			\sigma_{\min}\bigl(D \mathbf{K}(\hat{\mathbf{f}})\bigr) \|\mathbf{f}^{k+1} - \mathbf{f}^\ast\|_2 \leq \big\| D \mathbf{K}(\hat{\mathbf{f}})(\mathbf{f}^{k+1} - \mathbf{f}^\ast) \big\|_2 \\
			\leq \sigma_{\max}\bigl(D \mathbf{K}(\hat{\mathbf{f}})\bigr) \big((1+\zeta)(1+\gamma)(1+\eta)-1\big) \| \mathbf{f}^k - \mathbf{f}^\ast \|_2. 
	\end{multline*}
Then
	\[
		\| \mathbf{f}^{k+1} - \mathbf{f}^\ast \|_2 \leq \kappa\bigl(D \mathbf{K}(\hat{\mathbf{f}})\bigr) \big((1+\zeta)(1+\gamma)(1+\eta)-1\big) \| \mathbf{f}^k - \mathbf{f}^\ast \|_2,
	\]
	where $\kappa(\cdot)=\sigma_{\max}(\cdot)/\sigma_{\min}(\cdot)$ is the condition number. 
	
	Note that if $\kappa\bigl(D \mathbf{K}(\hat{\mathbf{f}})\bigr) \big((1+\zeta)(1+\gamma)(1+\eta)-1\big)\in(0,1)$, the proposed algorithm converges Q-linearly. This also shows that a smaller residual $\bm{\delta}^k$  yields a smaller constant $\zeta$  and thus a faster convergence rate.

	\section{Numerical experiments}\label{section5}
	We perform numerical experiments to validate the performance of the proposed algorithm for solving geometric-inconsistent MSCT reconstruction problem with ray-dependent energy spectra. For comparison, we include several existing iterative methods, namely,  the NKM with cyclic strategy \cite{Gao_NKM}, and the NCPD algorithm \cite{CHEN2021101821}. Here, for NKM, one iteration corresponds to a complete cycle of updates.
	
	\subsection{Experiment settings}
	The tests are conducted on a workstation running Python with Operator Discretization Library (ODL) \cite{adler2018odl}, equipped with an Intel i9-13900K @3.0GHz CPU and Nvidia RTX A4000 GPU. In experiments, a dual‑energy spectrum configuration is employed: the low‑energy spectrum is set at 80 kV, and the high‑energy spectrum at 140 kV with an additional 1 mm thick copper filter. Both spectra are generated using the open‑source spectrum simulator SpectrumGUI \cite{spectrumgui}, with a discrete energy sampling interval of 1 kV. The ray-dependent energy spectra are obtained by introducing perturbations to the low- and high-energy spectra followed by non‑negative normalization. Part of the ray-dependent energy spectra used in our tests are shown in \cref{spectra}. The MACs of the basis materials, namely, water and bone, are obtained from \cite{Hubbell1995TablesOX}. Note that using two basis materials is a standard and physically justified approach in clinical DECT. Given that both fan-beam and cone-beam geometries can be equivalently transformed into a parallel-beam configuration, we employ geometric-inconsistent parallel-beam geometries for the low- and high-energy spectra.
	
With these configurations, we generate simulated noiseless data by substituting the ray-dependent energy spectra, MACs of water and bone, discrete X-ray transform, and truth basis images into \eqref{dm2}. Reconstruction accuracy is assessed by using the following relative error metrics
	\[
		\text{RE}_{\mathbf{f}}^{k}:=\frac{\|\mathbf{f}^{k}-\mathbf{f}^{\ast}\|_2}{\|\mathbf{f}^{\ast}\|_2},\quad \text{RE}_{\mathbf{g}}^{k}:=\frac{\|\mathbf{K}(\mathbf{f}^{k})-\mathbf{g}\|_2}{\|\mathbf{g}\|_2},
	\]
where $\mathbf{f}^{k}$ denotes the $k$-th iteration point computed by the respective algorithm, $\mathbf{f}^{\ast}$ the ground truth, and $\mathbf{g}$ the simulated noiseless or noisy data. Furthermore, when the measured data contains noise, we use the following error metrics between adjacent iterations to evaluate the numerical convergence of the iterative algorithm
	\[
		\Delta_{\mathbf{f}}^{k}:=\frac{\|\mathbf{f}^{k+1}-\mathbf{f}^{k}\|_2}{\|\mathbf{f}^{k}\|_2},\quad \Delta_{\mathbf{g}}^k:=\frac{\|\mathbf{K}(\mathbf{f}^{k+1})-\mathbf{K}(\mathbf{f}^{k})\|_2}{\|\mathbf{g}\|_2}.
	\]
	
In \cref{alg:msct_recon}, we set $C_d^{[q]}=0$, $q=1,\dots,Q$, $d=1,\dots,D$ by selecting $\hat{\mathbf{f}}=\mathbf{0}$. We implement the $(\mathbf{P}^{[q]})^{+}$, $q=1,\dots,Q$ using the FBP method with the 1-bandwidth Ram--Lak filter. Without loss of generality, the initial points for all tested algorithms are set to the zero vector.
	\begin{figure}[htbp]
		\centering
		\includegraphics[width=0.9\textwidth]{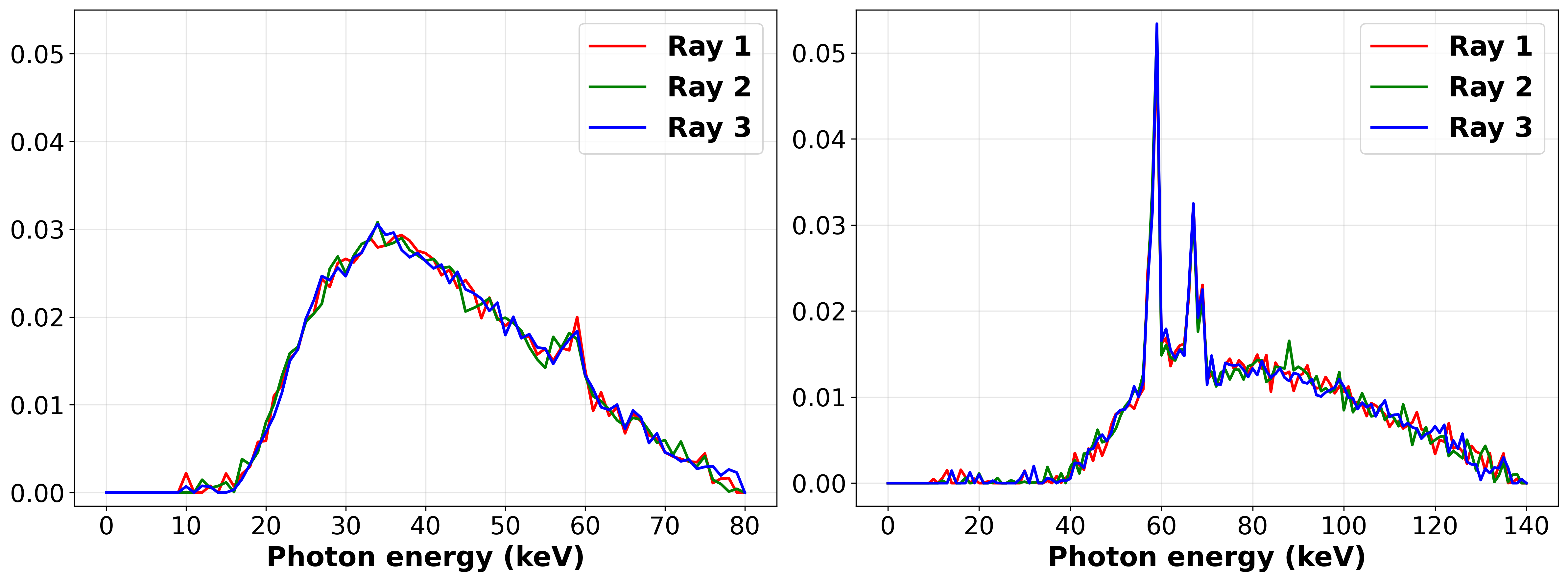}
		\caption{Part of the ray-dependent energy spectra used in the tests. The left figure shows the 80 kV spectra, and the right figure shows the 140 kV spectra.}
		\label{spectra}
	\end{figure}
	
	\subsection{Simulated noiseless data for the forbild phantom}
	In this test, a forbild phantom is composed of the truth water and bone basis images which are both discretized as $128\times128$ pixels on the square area $\left[-5,5\right] \times \left[-5,5\right]\text{cm}^2$ and take gray values over $\left[0,1\right]$, as shown in row 1 of \cref{result1}.
	
The scanning parameters are as follows: for the high-energy spectrum, 384 scanning angles are uniformly distributed over $\left[0,\pi\right)$, with 384 detector elements evenly spaced over the interval $\left[-7.05, 7.05\right]\text{cm}$. For the low-energy spectrum, 384 scanning angles are uniformly distributed over $\left[\pi/768,\pi/768 + \pi\right)$, while the detector configuration remains the same as that used for the high-energy case.
	
All algorithms are executed over 100 iterations. The curves of the performance metrics $\text{RE}_{\mathbf{f}}^{k}$ and $\text{RE}_{\mathbf{g}}^{k}$ as function of the iteration number are depicted in \cref{RE_iter}. It can be clearly observed that the proposed algorithm attains stable convergence with high accuracy within only 60 iterations, whereas the relative errors associated with the NCPD and NKM algorithms decrease slowly. To further demonstrate the computational efficiency of the proposed algorithm, we also present the relative error metrics $\text{RE}_{\mathbf{f}}^{k}$ and $\text{RE}_{\mathbf{g}}^{k}$ plotted over time. As illustrated in \cref{RE_time}, for $128 \times 128$ image reconstruction tasks, the proposed algorithm achieves machine precision in just over ten seconds, whereas the other two iterative methods require substantially more time to converge.

	\begin{figure}[htbp]
		\centering
		\includegraphics[width=0.9\textwidth]{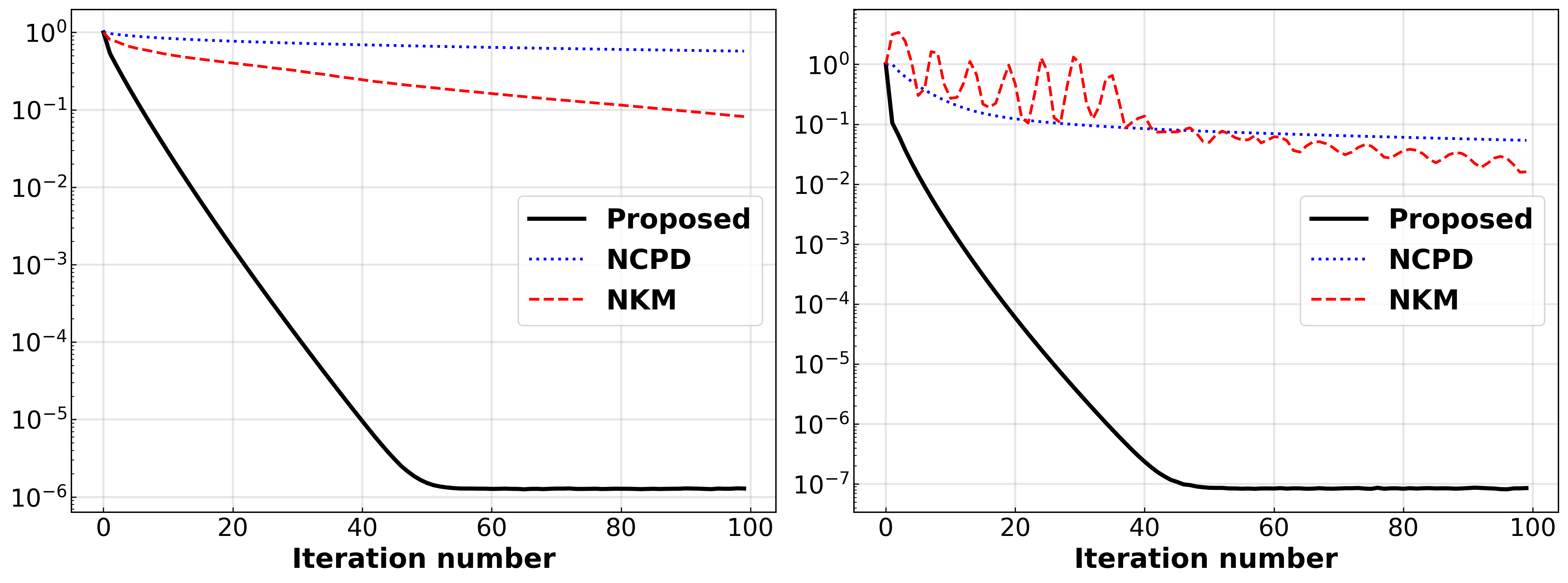}
		\caption{Metrics $\text{RE}_{\mathbf{f}}^{k}$ (left) and $\text{RE}_{\mathbf{g}}^{k}$ (right) are plotted in semi-log scale over iteration numbers for reconstructing basis images of the forbild phantom by the proposed algorithm, NCPD and NKM.}
		\label{RE_iter}
	\end{figure}
	
	\begin{figure}[htbp]
		\centering
		\includegraphics[width=0.9\textwidth]{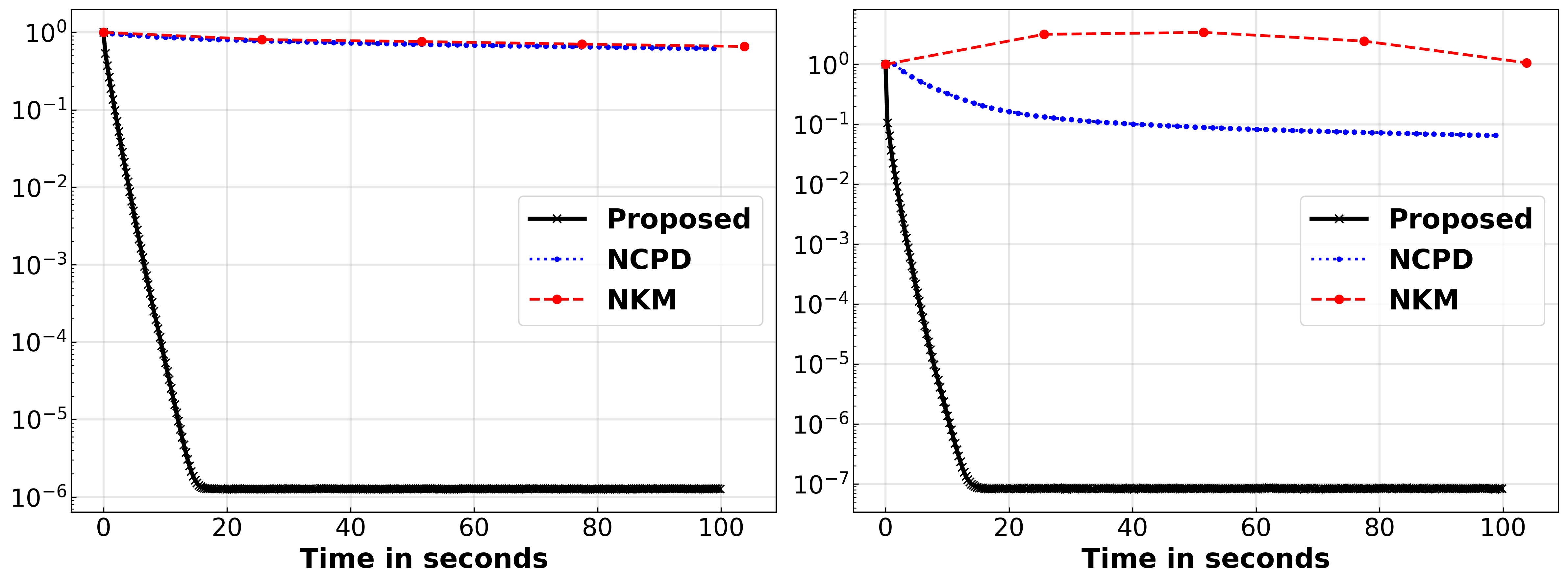}
		\caption{Metrics $\text{RE}_{\mathbf{f}}^{k}$ (left) and $\text{RE}_{\mathbf{g}}^{k}$ (right) are plotted in semi-log scale over time (in seconds) for reconstructing basis images of the forbild phantom by the proposed algorithm, NCPD and NKM.}
		\label{RE_time}
	\end{figure}
	
These results demonstrate that the proposed algorithm enables efficient and high-precision inversion for noiseless data. The reconstructed basis images yielded by all methods, together with the VMIs and their corresponding ground truth references, are displayed in \cref{result1}. Visual inspection intuitively confirms that the reconstruction outcomes produced by the proposed algorithm agree most closely with the ground truth.
	
	\begin{figure}[htbp]
		\centering
		\includegraphics[width=0.9\textwidth]{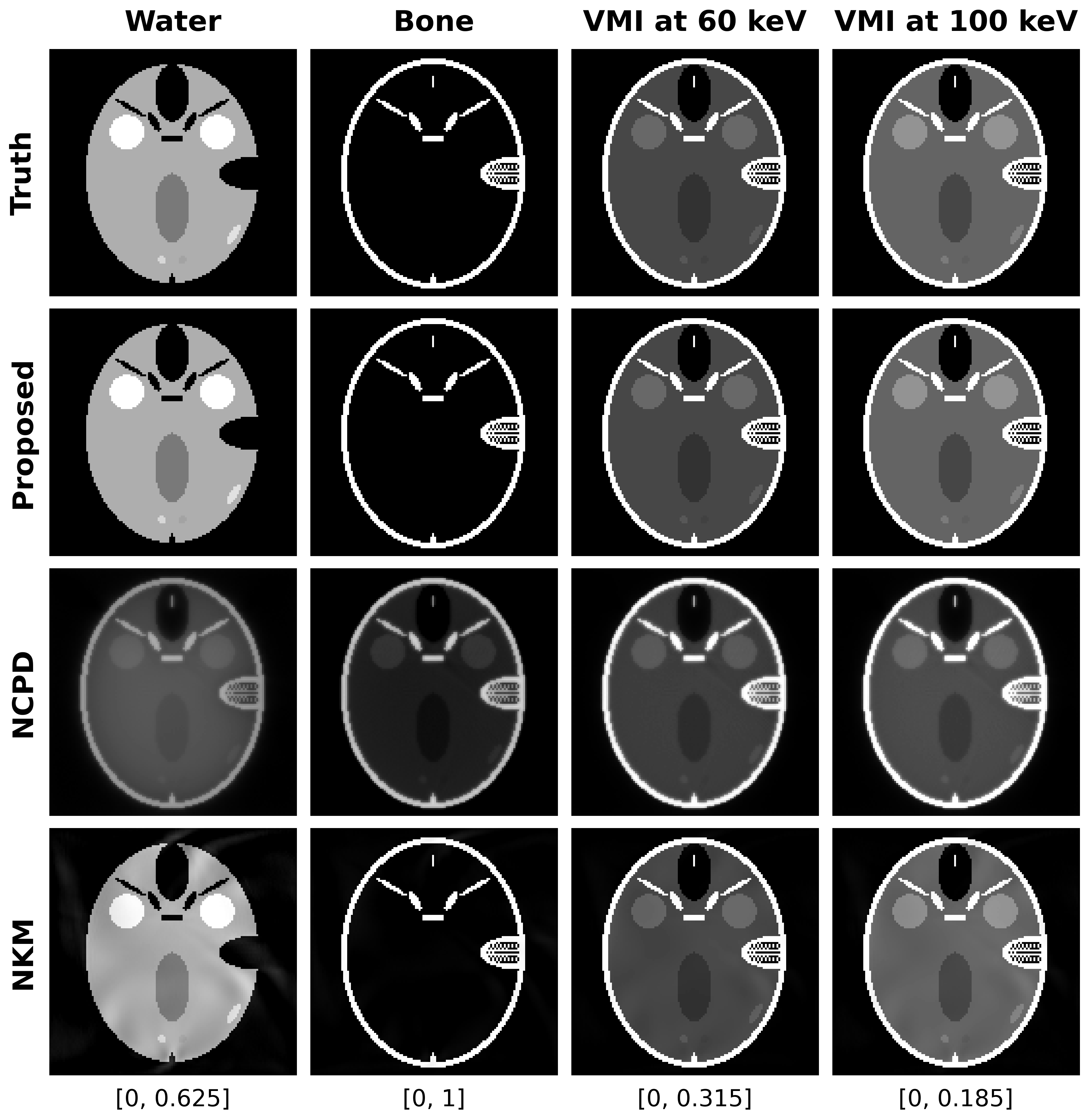}
		\caption{Reconstruction results in the noiseless case. From left to right: basis images of water and bone, VMIs at energies 60 keV and 100 keV of the forbild phantom. From top to bottom: the true images, the results after 100 iterations using the proposed algorithm, NCPD and NKM.}
		\label{result1}
	\end{figure}
	
	\subsection{Simulated noisy data for the realistic torso phantom}
	To further validate the robustness of the proposed algorithm on noisy data, we conduct a numerical simulation based on a realistic torso image of size $256 \times 256$, defined over the square domain $\left[-5,5\right] \times \left[-5,5\right]\text{cm}^2$ with gray values normalized to the range $\left[0,1\right]$, as shown in the first row of \cref{result2}.
	
We adopt a similar parallel-beam geometric configuration. For the high-energy spectrum, 768 detectors are uniformly distributed over $\left[-7.05, 7.05\right]\text{cm}$, with 768 scanning angles evenly spaced over $\left[0,\pi\right)$. For the low-energy spectrum, 768 scanning angles are uniformly distributed over $\left[\pi/1536,\pi/1536+\pi\right)$, while the detector setup remains identical to the high-energy case. Moreover, noisy data with 27.2 dB are generated by adding Gaussian noise to the noiseless data.
	
All algorithms are executed over 50 iterations. We plot the curves of $\text{RE}_{\mathbf{f}}^{k}$ and $\text{RE}_{\mathbf{g}}^{k}$ for all algorithms in \cref{RE_iter_256}. The proposed algorithm achieves numerical convergence within 10 iterations, whereas the NCPD and NKM algorithms do not converge within 50 iterations. This verifies its robustness and efficiency in addressing noisy data. The curves of $\text{RE}_{\mathbf{f}}^{k}$ and $\text{RE}_{\mathbf{g}}^{k}$ over time are displayed in \cref{RE_time_256}. It is observed that the proposed algorithm can attain high accuracy for noisy $256 \times 256$ image reconstruction within only 20 seconds. 
	
To further illustrate the convergence behavior under noisy conditions, we plot the curves of metrics $\Delta_{\mathbf{f}}^{k}$ and $\Delta_{\mathbf{g}}^{k}$ in \cref{DE_iter_256}. As shown, the proposed algorithm attains machine precision in metrics between adjacent iterations within 40 iterations, verifying its efficient numerical convergence in noisy conditions.
	
	\begin{figure}[htbp]
		\centering
		\includegraphics[width=0.9\textwidth]{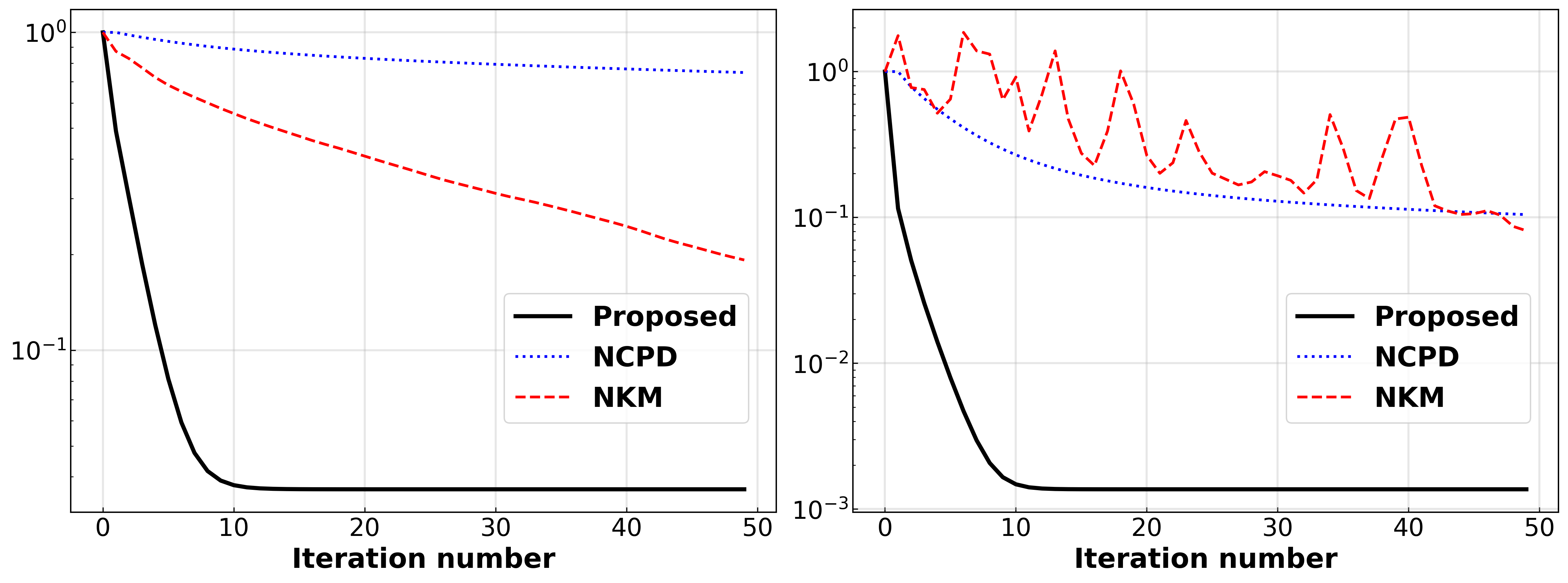}
		\caption{Metrics $\text{RE}_{\mathbf{f}}^{k}$ (left) and $\text{RE}_{\mathbf{g}}^{k}$ (right) are plotted in semi-log scale over iteration numbers for reconstructing basis images of the realistic torso image by the proposed algorithm, NCPD and NKM.}
		\label{RE_iter_256}
	\end{figure}
	
	\begin{figure}[htbp]
		\centering
		\includegraphics[width=0.9\textwidth]{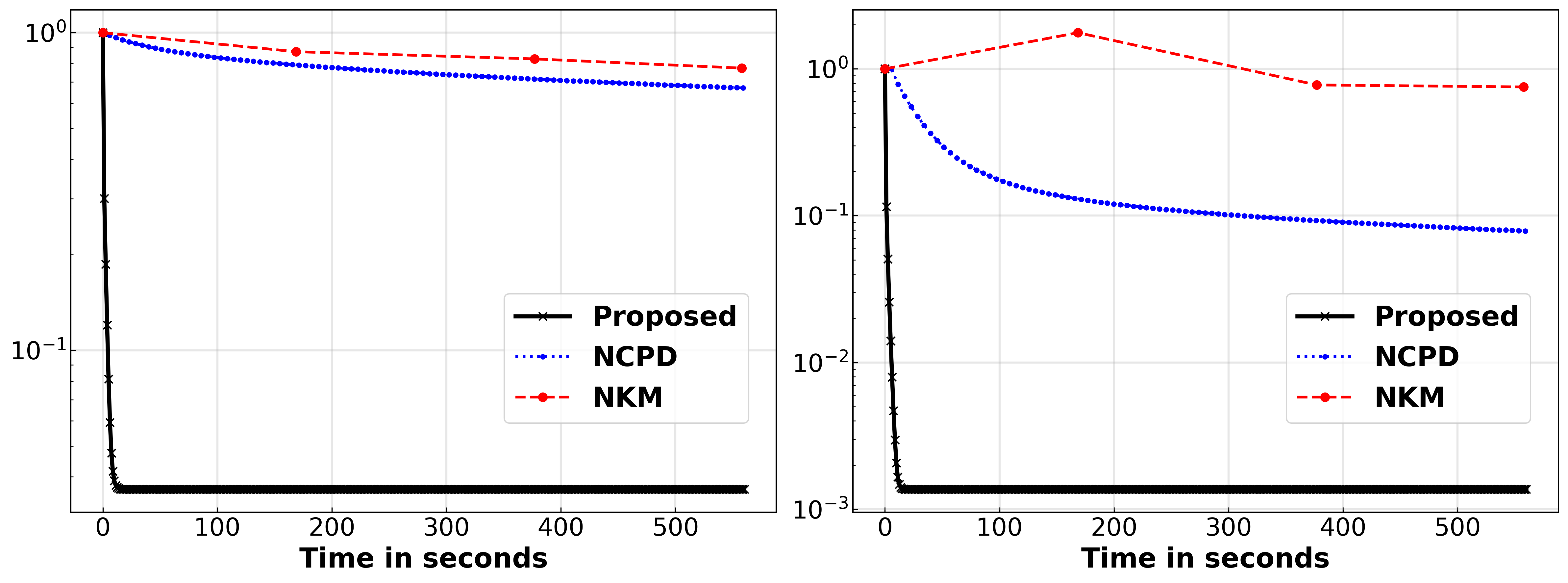}
		\caption{Metrics $\text{RE}_{\mathbf{f}}^{k}$ (left) and $\text{RE}_{\mathbf{g}}^{k}$ (right) are plotted in semi-log scale over time (in seconds) for reconstructing basis images of the realistic torso image by the proposed algorithm, NCPD and NKM.}
		\label{RE_time_256}
	\end{figure}
	
	\begin{figure}[htbp]
		\centering
		\includegraphics[width=0.9\textwidth]{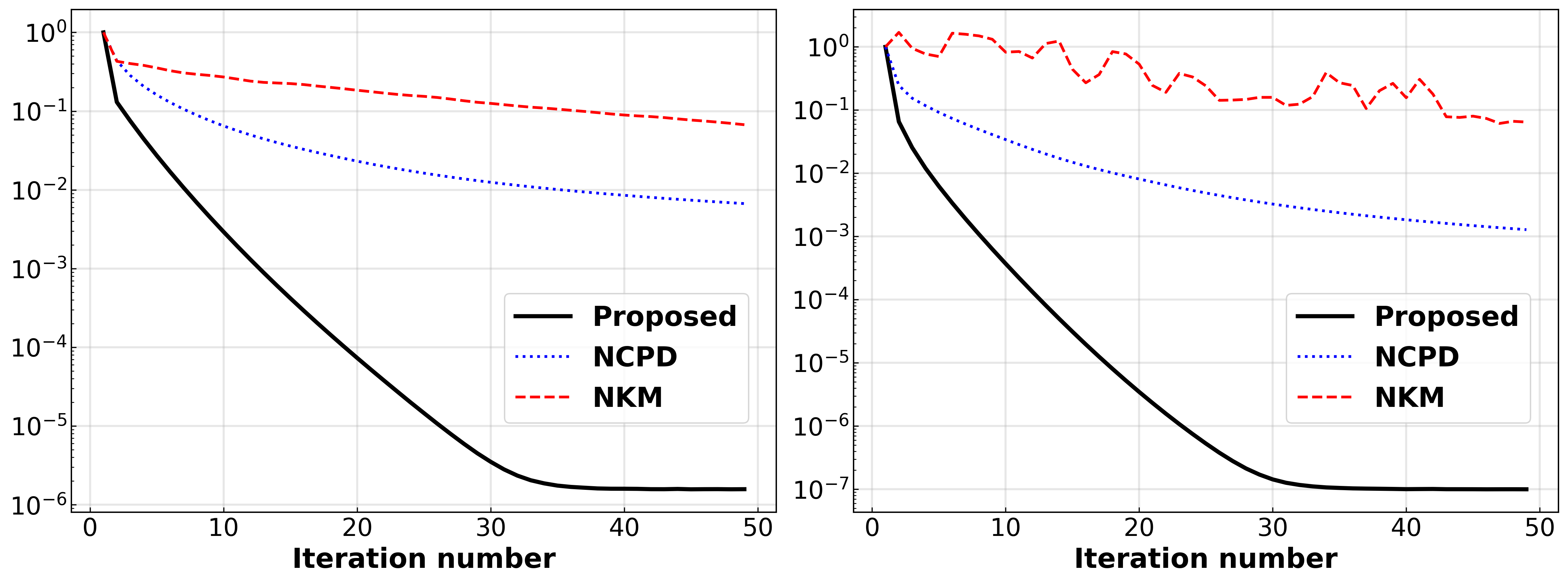}
		\caption{Metrics $\Delta_{\mathbf{f}}^{k}$ (left) and $\Delta_{\mathbf{g}}^{k}$ (right) are plotted in semi-log scale over iteration numbers for reconstructing basis images of the realistic torso image by the proposed algorithm, NCPD and NKM.}
		\label{DE_iter_256}
	\end{figure}
	
The reconstructed basis images, VMIs and their corresponding ground truth references obtained by all methods are shown in \cref{result2}, while the absolute difference maps are displayed in \cref{Result_Difference_256}. From these difference maps, it can be observed that the reconstruction results of the proposed algorithm agree most closely with the ground truth.
	
	\begin{figure}[htbp]
		\centering
		\includegraphics[width=0.9\textwidth]{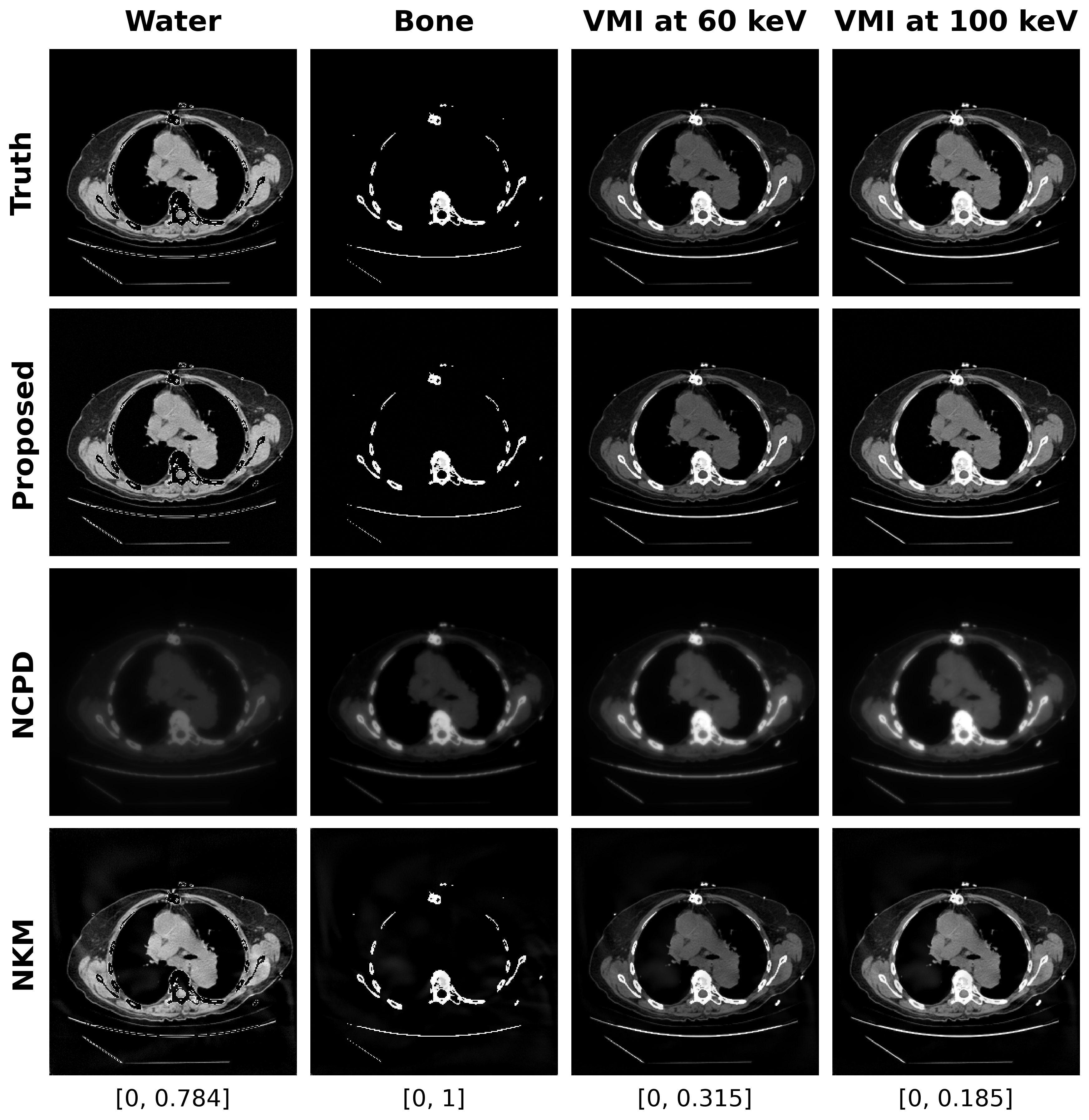}
		\caption{Reconstruction results in the noisy case. From left to right: basis images of water and bone, VMIs at energies 60 keV and 100 keV of the realistic torso image. From top to bottom: the true images, the results after 50 iterations using the proposed algorithm, NCPD and NKM.}
		\label{result2}
	\end{figure}

	\begin{figure}[htbp]
		\centering
		\includegraphics[width=0.9\textwidth]{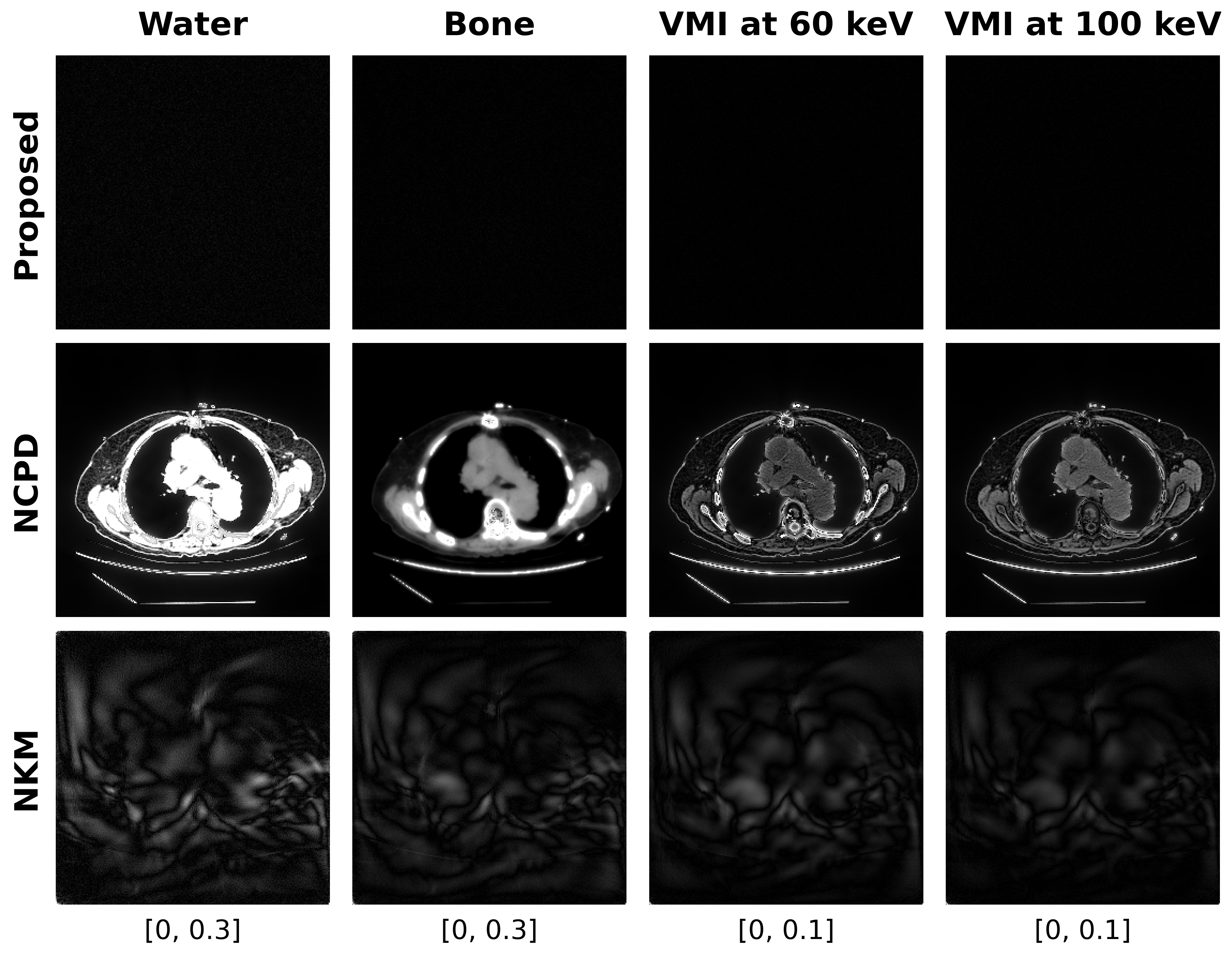}
		\caption{Absolute difference maps between the reconstructed results and the ground truths in the noisy case. Each column from left to right corresponds to the water basis image, bone basis image, VMI at 60 keV, and VMI at 100 keV, respectively. Each row from top to bottom shows the absolute difference maps for the proposed algorithm, NCPD, and NKM, rspectively.}
		\label{Result_Difference_256}
	\end{figure}

	\subsection{Alternative selection methods for aggregated energy spectrum}
	In fact, there are many viable ways to construct an aggregated energy spectrum, as long as the ray-dependent spectrum is decoupled from the rays while the overall spectral shape is roughly preserved. For instance, the median-aggregated energy spectrum can be adopted, defined as
	\[
		\bar{s}_{m}^{[q]}= \text{median}\{s_{jm}^{[q]}\}_{j=1}^{J^{[q]}},
	\]
	and the $L_2$-mean-aggregated energy spectrum can be similarly defined as
	\[
		\bar{s}_{m}^{[q]} = \sqrt{\frac{1}{J^{[q]}}\sum_{j=1}^{J^{[q]}} (s_{jm}^{[q]})^2 }.
	\]
	Note that if the energy spectrum is ray-independent, all of the three aggregated counterparts reduce to the same spectrum itself. 
	
	To validate the feasibility of these different selection approaches, we carried out a numerical experiment. The detailed configurations are outlined below. We employed realistic water and bone basis images randomly selected from the LiTS dataset \cite{LiST}, as displayed in the first row of \cref{result3}. Each basis image is discretized into $512\times512$ pixels over a square field of view spanning $[-15, 15] \times [-15, 15] \text{ cm}^2$. 

In this experiment, under the 80 kV spectrum, 1280 parallel projections were uniformly sampled over a detector range of $[-21.15, 21.15] \text{ cm}$ for each of the $1280$ views, with the views evenly distributed across $[0,\pi)$. For the filtered 140 kV spectrum, the same numbers of projections and views were adopted, but the view angles  are shifted to the interval $[\pi/2560, \pi/2560 + \pi)$.
	
We denote the algorithm employing the mean-aggregated energy spectrum as Mean, the algorithm with the median-aggregated energy spectrum as Median, and the algorithm with the $L_2$-mean-aggregated energy spectrum as $L_2$-Mean. The initial point is initialized as $\mathbf{f}^0=\mathbf{0}$.
	
As shown in \cref{RE_iter_512}, we plot the curves of metrics $\text{RE}_{\mathbf{f}}^{k}$ and $\text{RE}_{\mathbf{g}}^{k}$ over 100 iterations. It is observed that the convergence behaviors of the proposed algorithm using these three different aggregated energy spectra are almost identical. The corresponding basis images and VMIs are displayed in \cref{result3}, and they are visually comparable and all closely approximate the true images.
	
Collectively, these findings demonstrate that the convergence, accuracy, and efficiency of the proposed algorithm remain robust against the specific choice of the aggregated energy spectra.
	
	\begin{figure}[htbp]
		\centering
		\includegraphics[width=0.9\textwidth]{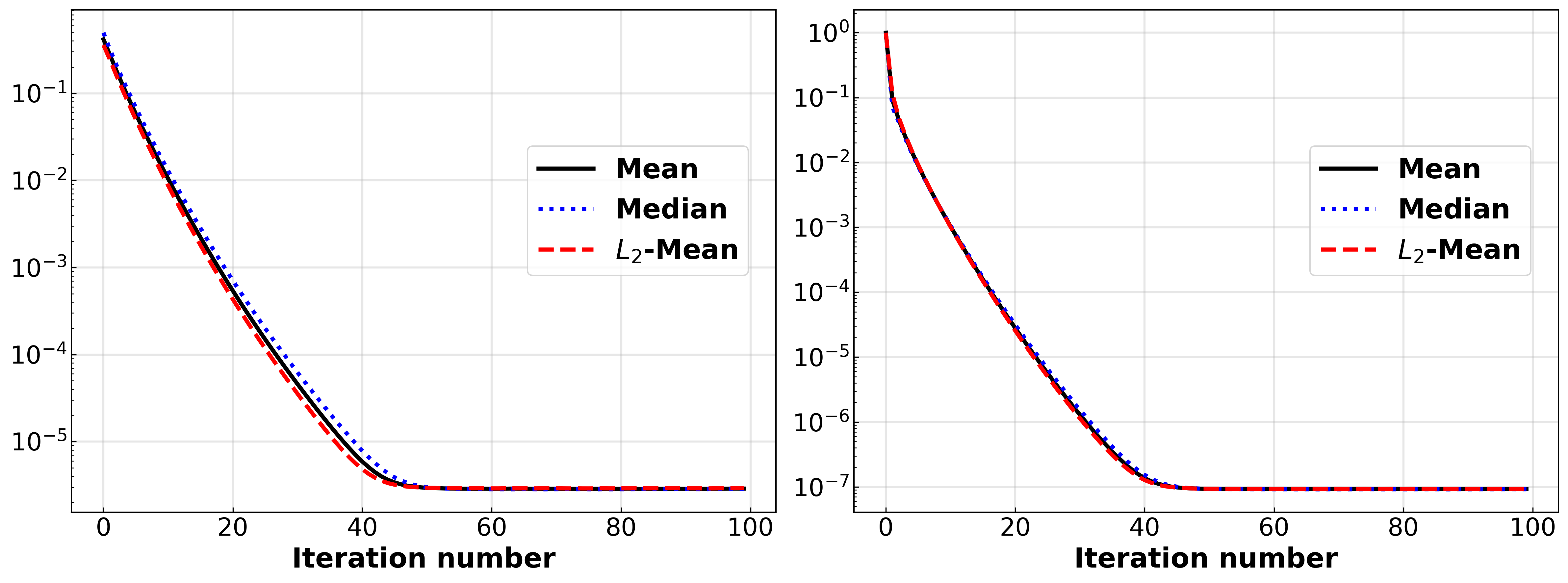}
		\caption{Metrics $\text{RE}_{\mathbf{f}}^{k}$ (left) and $\text{RE}_{\mathbf{g}}^{k}$ (right) are plotted in semi-log scale over iteration numbers for reconstructing basis images of from \cref{result3} by the proposed algorithm with different aggregated energy spectra.}
		\label{RE_iter_512}
	\end{figure}
	
	\begin{figure}[htbp]
		\centering
		\includegraphics[width=0.9\textwidth]{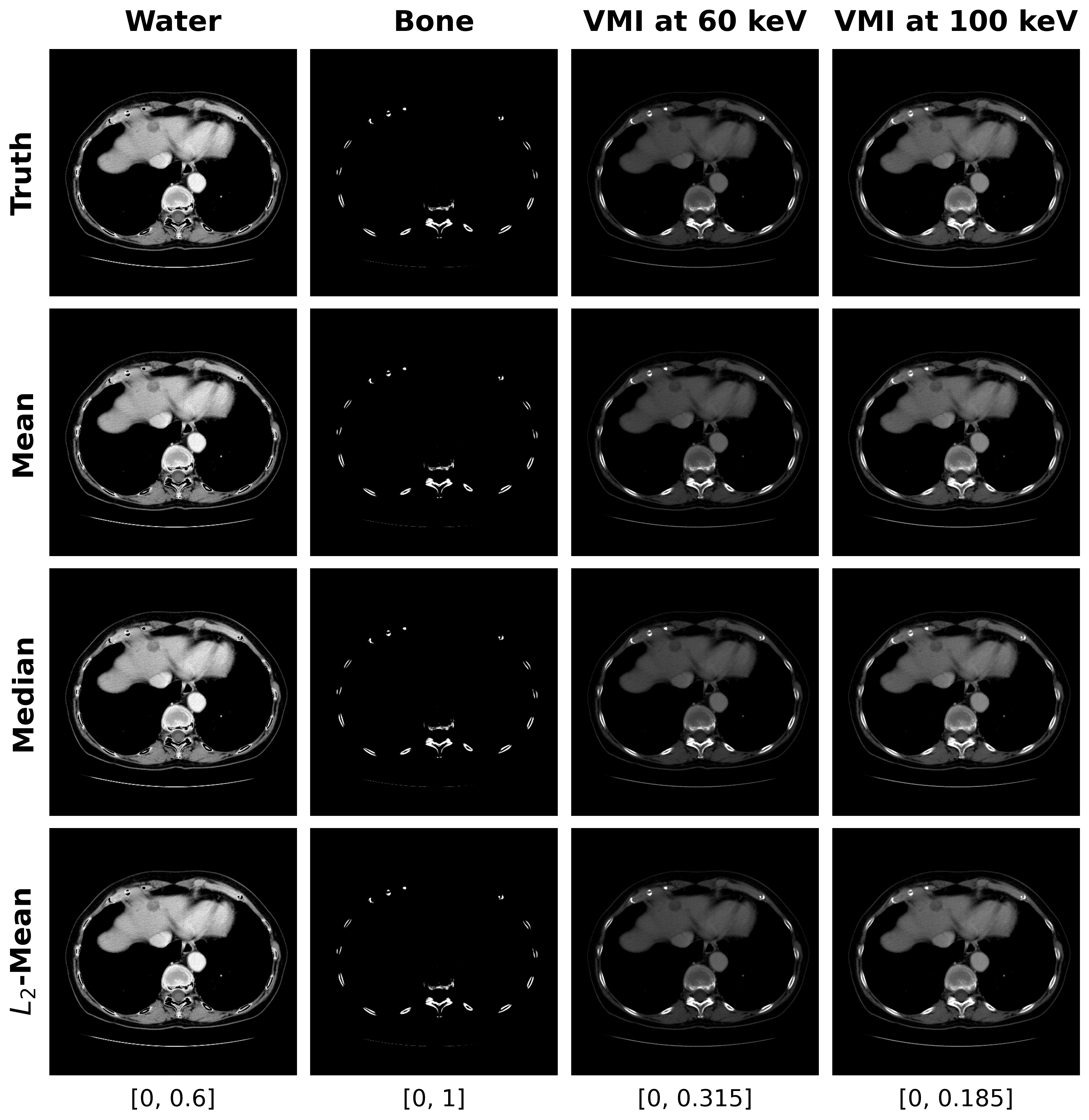}
		\caption{Reconstruction results in the noiseless case. From left to right: basis images of water and bone, VMIs at energies 60 keV and 100 keV of a clinical image. From top to bottom: the true images, the results after 100 iterations using the proposed algorithm with different aggregated energy spectra.}
		\label{result3}
	\end{figure}

\section{Discussion and conclusion}\label{section6}

	In this work, we proposed an efficient and accurate image reconstruction algorithm for geometric-inconsistent MSCT with ray-dependent energy spectra.
	
	Inspired by the AFIRE algorithm, we observed that the Jacobian matrix possesses a special structure that enables efficient inversion when the involved weight coefficients are independent of the ray indeces. Motivated by this insight, we introduced a ray-decoupling strategy for the weight coefficients. Specifically, we selected special points to decouple the basis sinograms from individual rays, and employed the aggregated energy spectra to decouple the energy spectra from the ray indeces. With this ray-wise decoupling, we derived an approximate Jacobian matrix that can be expressed as a block product of a diagonal matrix of projection matrices and a very small-scale matrix. Based on this approximation, we developed a new efficient and accurate reconstruction algorithm.
	
Under appropriate conditions, we further established the convergence theory of the proposed algorithm. Specifically, we proved that when the spectral discrepancy between the ray-dependent energy spectra and their aggregated version is sufficiently small, and when the residual at each iteration satisfies a certain contraction condition, the sequence generated by the algorithm converges to the unique solution of the nonlinear system.
	
We conducted both noiseless and noisy numerical experiments to demonstrate that the proposed algorithm can efficiently and accurately reconstruct basis images under geometric-inconsistent scanning configurations with ray-dependent energy spectra. Numerical comparisons revealed that the proposed method significantly outperforms the existing NCPD and NKM algorithms in terms of both reconstruction accuracy and computational efficiency. In the noiseless setting, the proposed algorithm achieved stable convergence within 60 iterations for $128 \times 128$ images and attained machine precision in just over ten seconds, whereas the other compared methods were far from convergence. In the noisy scenario, it converged within 10 iterations for $256 \times 256$ images and exhibited strong robustness to noise, whereas the compared methods failed to converge within 50 iterations. Furthermore, we investigated the impact of different choices for the aggregated energy spectra, namely, the mean, median, and $L_2$-mean aggregations, and found that the convergence, accuracy, and efficiency of the proposed algorithm remain robust against the specific choice of the aggregated energy spectra. These results confirmed the effectiveness, robustness, and practicality of the proposed algorithm for geometric-inconsistent MSCT reconstruction with ray-dependent energy spectra. 
	
Furthermore, the proposed algorithm demonstrates remarkable flexibility and wide applicability. It seamlessly handles both ray-dependent and ray-independent energy spectra, and is robust to scanning geometric parameters, accommodating both consistent and inconsistent configurations. A key advantage is its suitability for cases where the number of energy bins meets or exceeds the number of basis materials, making it particularly advantageous for photon-counting detector CT. Moreover, the framework can be readily extended to three-dimensional spectral CT imaging applications.
	
Future work will focus on uncertainty quantification of interest. In this work, both the proposed algorithm and numerical experiments assume that the ray-dependent energy spectra are precisely known. In practice, however, it is impossible to obtain exact knowledge of the energy spectra, as spectral estimation is an extremely ill-posed problem \cite{Pan_2019,Leinweber, 10081437,Zhao_estimate}. Therefore, we plan to investigate how spectral errors affect the proposed model and algorithm, and further develop uncertainty quantification methods to assess the reliability of the reconstruction results under spectral inaccuracies.

\bibliographystyle{plain}
\bibliography{reference}

\end{document}